\newcommand{\PreserveBackslash}[1]{\let\temp=\\#1\let\\=\temp}
\newcolumntype{C}[1]{>{\PreserveBackslash\centering}p{#1}}
\newcolumntype{R}[1]{>{\PreserveBackslash\raggedleft}p{#1}}
\newcolumntype{L}[1]{>{\PreserveBackslash\raggedright}p{#1}}
\newtheorem{exmp}{\textbf{Example}}
\newtheorem{definition}{\textbf{Definition}}
\newtheorem{theorem}{\textbf{Theorem}}
\newtheorem{lemma}{\textbf{Lemma}}
\newtheorem{corollary}{\textbf{Corollary}}
\newtheorem{hyp}{\textbf{Hypothesis}}
\newcommand{\remove}[1]{}
\newenvironment{proof}{\noindent {\bf
		Proof.}}{\rule{3mm}{3mm}\par\medskip}
\pgfplotsset{compat=1.3}
\newcommand\MyLBrace[2]{%
  \left.\rule{0pt}{#1}\right\}\text{#2}}
  \definecolor{nvb}{RGB}{65,105,225}
\newcommand*\circled[1]{\tikz[baseline=(char.base)]{
            \node[shape=circle,draw,inner sep=1pt] (char) {#1};}}
\newcolumntype{P}[1]{>{\centering\arraybackslash}p{#1}}
\newcolumntype{M}[1]{>{\centering\arraybackslash}m{#1}}
\begin{document}
	
\title{How the Network Topology, Traffic Distribution, and Routing Scheme  Impact on the Spectrum Usage in Elastic Optical Networks}	

\author{Haitao~Wu,~Fen~Zhou,~\IEEEmembership{Senior~Member,~IEEE},~Zuqing~Zhu,~\IEEEmembership{Senior~Member,~IEEE},~Yaojun~Chen

\thanks{H. Wu is with the Department of Mathematics, Nanjing University, Nanjing 210093, China, and he is also with the CERI-LIA at the University of Avignon, France. (email: haitao.wu@alumni.univ-avignon.fr). }
\thanks{F. Zhou is with the CERI-LIA (Computer Science Laboratory) at the University of Avignon, France. (email: fen.zhou@univ-avignon.fr).}
\thanks{Z. Zhu is with the School of Information Science and Technology, University of Science and Technology of China, Hefei, Anhui 230027, China. (e-mail: zqzhu@ieee.org).}
\thanks{Y. Chen is with Department of Mathematics, Nanjing University, Nanjing 210093, China. (email: yaojunc@nju.edu.cn).}
}

\maketitle

\begin{abstract}
Elastic Optical Network (EON) has been considered as a promising optical networking technology to architect the next-generation backbone networks. Routing and Spectrum Assignment (RSA) is the fundamental problem in EONs to realize service provisioning. Generally, the RSA is solved by routing the requests with lightpaths first and then assigning spectrum resources to the lightpaths to optimize the spectrum usage. 
Thus, the spectrum assignment explicitly decide the final spectrum usage of EONs. However, besides the spectrum assignment, there are three other factors, the network topology, traffic distribution and routing scheme, implicitly impact on the spectrum usage. Few related work involves in the implicit impact mechanism. In this paper, we aim to provide a thoroughly theoretical analysis on the impact of the three key factors on the spectrum usage. To this end, two theoretical chains are proposed: (1) The optimal spectrum usage can be measured by the chromatic number of the conflict graph, which is positively correlated to the intersecting probability, \emph{i.e.}, the smaller the intersecting probability, the smaller the optimal spectrum usage; (2) The intersecting probability is decided by the network topology, traffic distribution and routing scheme via a quadratic programming parameterized with a matrix of conflict coefficients. The effectiveness of our theoretical analysis has been validated by extensive numerical results. Meanwhile, our theoretical deductions also permit to give several constant approximation ratios for RSA algorithms. 
\end{abstract}
\begin{IEEEkeywords}
Elastic Optical Networks (EONs), Network Topology, Traffic Distribution, Routing Scheme, Conflict Coefficients.
\end{IEEEkeywords}

\section{Introduction}
\label{sec:intro}

\IEEEPARstart{R}{ecently}, various types of connection requests proliferate rapidly in the backbone networks. Since the traffic bandwidths are growing exponentially while the spectrum resources in the optical layer are not unlimited, highly-efficient and flexible optical networking technologies have attracted intensive research interests \cite{Gerstel2012, s7:b1}. Specifically, due to fact that the flexibility of traditional Wavelength Division Multiplexing (WDM) networks is restricted by the fixed grids, the flexible-grid Elastic Optical Networks (EONs) have been considered as promising replacements to architect the next-generation backbone networks \cite{s6:b4, Zhu2013_JLT, s6:b5}. In an EON, the spectrum resources on each fiber link are divided into narrow-band (\emph{i.e.}, $12.5$ GHz or less) Frequency Slices (FSs), with which the EON can allocate just enough bandwidths to satisfy each connection request adaptively \cite{s7:b2,s7:b3}. Therefore, the spectrum utilization can be effectively improved in EONs in contrast to the WDM optical networks.

The fundamental problem to realize service provisioning in EONs is the Routing and Spectrum Assignment (RSA) \cite{s1:b4, Gong2012_CL}, \emph{i.e.}, how to route a connection request from its source to its destination  by a lightpath, and then assign a block of available FSs on it. The RSA problem has been proven to be $\mathcal{NP}$-hard in \cite{s1:b4}. Similar to Routing and Wavelength Assignment (RWA) problem in WDM networks\cite{ss4:b6}, two variants of the RSA problem have been studied in the literature: max-RSA and min-RSA\cite{ss4:b7,ss4:b8}. For the former problem, the focus is on the provisioning over an EON under limited spectrum resource, and the objective is to maximize the number of requests that can be served. The latter one has a planning concern and its objective is to minimize the required spectrum usage to serve all the requests\cite{ss4:b7,ss4:b8}. In this paper, we focus on the latter one with planning concern.

Although the spectrum assignment is the direct determinant, the network topology, traffic distributions and routing scheme implicitly impact on the final spectrum usage. After a set of requests, generated from the traffic distribution, arrive at the underlying EON and are routed on their lightpaths by the routing scheme, there exists an optimal spectrum assignment that reaches the optimal spectrum usage. From another perspective, the optimal spectrum usage is the 
lower bound on the final spectrum usage that can not be reduced no matter how the spectrum assignment optimizes. The optimal spectrum usage, \textit{i.e.}, the lower bound on the final spectrum usage, apparently stems from the network topology, traffic distributions and routing scheme.  
While the RSA problem has been extensively studied \cite{ss4:b1,ss4:b2,s1:b2,s6:b1}, currently few related work involves in this topic.    
Nonetheless, it is pivotal to comprehend how the network topology, traffic distribution and routing scheme impact on the optimal spectrum usage. The grasp on this implicit impact mechanism can help to adjust the network topology, traffic distribution and routing scheme to optimize the final spectrum usage so as to boost the performance of EONs.   
 
In this work, we deduce two theoretical chains to rigorously analyze the three key factors' impact on the optimal spectrum spectrum usage.

\subsection{The theoretical chain 1}
We prove that the optimal spectrum usage can be bounded by the chromatic number of conflict graph. More specifically, the conflict graph is an auxiliary graph that describes the intersections among the lightpaths for served requests. Several constant approximation ratios of RSA algorithms have also been derived based on the theoretical analysis. We then show that the intersecting probability determines the chromatic number of the conflict graph by leveraging an important lemma in random graph theory. This is the first theoretic chain that the intersecting probability, through the chromatic number, determines the optimal spectrum usage.

\subsection{The theoretical chain 2}
We propose a concept of conflict coefficient, which is a conditional intersecting probability determined by the network topology and traffic distribution. Through a matrix composed of the conflict coefficients, we introduce a quadratic programming named Global Optimal Formulation (GOF) which is decided by the routing scheme and decides the intersecting probability. This is the second theoretical chain that the network topology, traffic distribution and routing scheme, through the GOF, determine the intersecting probability.

The main contributions of this work are summarized as follows:
\begin{itemize}
\item We give the upper and lower bounds of the optimal spectrum usage by analyzing the chromatic number of the conflict graph, which is a non-trivial extension of the counterpart in the WDM networks. Several constant approximation ratios of RSA algorithms have also been derived through the theoretical analysis.
\item By leveraging random graph theory, we provide an analytical approach on how to connect the chromatic number of conflict graph with the intersecting probability. Meanwhile, a matrix of conflict coefficients and the GOF, which embody the impact of the network topology, traffic distribution and routing scheme, are also proposed to determine the intersecting probability.
\item Within the proposed theoretical analysis, we evaluate three realistic EONs under two traffic distributions by the conflict coefficients and GOFs. Extensive simulations have also been conducted, whose results verify the effectivenesses of our theoretical deductions. 
\end{itemize}

The remaining of this paper is organized as follows. Section \ref{sec:rw} introduces the related work and our motivation. We present the formulation of the RSA problem and the conflict graph in Section \ref{sec:form}. Then, the theoretical work that reveals the relation between the optimal spectrum usage and the chromatic number of conflict graph is discussed in Section \ref{sec:themec}. In Section \ref{sec: tomufiandip}, we introduce the connection between the intersecting probability and the chromatic number, and propose the conflict coefficients and the quadratic programming GOF. Within the proposed theoretical analysis, we evaluate three EONs under two traffic distributions in Section \ref{sec:ciceon}. Extensive simulations under different scenarios are conducted in Section \ref{sec:numres} to verify our analysis. Finally, Section \ref{sec:conclusion} summarizes this paper.

\section{Related Work and Motivation}
\label{sec:rw}
In order to support future Internet cost-efficiently, optical networking technologies have to be optimized towards efficiency, flexibility, and scalability. By allocating spectrum resources in terms of narrow-band FSs \cite{s6:b4,s7:b4}, EONs can realize both sub-wavelength and super-channel optical transmissions according to the bandwidth requirements of connection requests. Therefore, EONs have been considered as promising candidates to architect the next-generation backbone networks. Parallel to RWA problem in WDM networks \cite{sss:b1}, the RSA problem is fundamental to service provisioning in EONs \cite{s1:b4, Gong2012_CL}. However, the RSA problem is more challenging, since it has to deal with various channel sizes and take into account the contiguity of FSs while the RWA does not. 

The RSA problem is $\mathcal{NP}$-hard \cite{s1:b4} and can be naturally separated into two subproblems\cite{s6:b1}: lightpath routing and spectrum assignment. In the lightpath routing, a request should be routed on an appropriate  lightpath to connect its source to its destination. The spectrum assignment, relatively analogous to the graph coloring problem \cite{htwuTON2017}, is to allocate available contiguous and continued FSs on the lightpath. Although the spectrum assignment directly decides the final spectrum usage, the routing scheme is also critical to the final spectrum usage. Meanwhile, the network topology and traffic distribution should also be taken in to account when planning the routing scheme. Since under different circumstances, the strategy of the routing scheme should be adaptive. Many routing schemes have been proposed \cite{ss4:b1,ss4:b2,s1:b2}. These routing schemes are executed in a relatively common way: (1) First pre-calculate a set of candidate paths for each source-destination pair; (2) then select the "best" path to the request based on some heuristic principles such as "the least congested" \cite{ss4:b1} and "the smallest load" \cite{s1:b2}. A comprehensive survey can be found in \cite{s6:b1}. 

Although many theoretical works \cite{s1:b4,ss4:b9,ss4:b10} on the RSA problem have been proposed by tools such as Integer Linear Programming (ILP) or mixed ILP. 
However, currently there is no theoretical work to interpret the implicit impact of the network topology, traffic distribution and routing scheme on the spectrum usage. Meanwhile, the ultimate goal of this research is to determine the optimal routing scheme while taking into account the network topology and traffic distribution. To the best of our knowledge, there has been no previous work on this topic. In this work, by leveraging random graph theory, we give a theoretical explanation about the implicit impact of the three factors on the spectrum usage.

\begin{table}[!h]
\caption{Notations} \label{tab: notations}
\begin{tabular}{|p{1cm}p{7cm}|}
\hline
$G(V,E)$     &The underlying EON, where $V$ is the set of nodes, and $E$ is the set of directed fiber links.\\
$\mathcal{D}$     &The traffic distribution, which specially refers to the distribution of occurrence probabilities of source-destination pairs.\\
$w_{sd}$     &The occurrence probability of source-destination pair $(s,d)$ determined by the traffic distribution $\mathcal{D}$.\\
$\mathcal{R}$       & The set of connection requests in $G(V,E)$. \\
$n$          & $=|\mathcal{R}|$, the number of connection requests. \\
$R_i(s_i,d_i)$  & $R_i \in \mathcal{R}$ represents the $i$-th connection request, where $s_i, d_i \in V$ are the source and destination nodes respectively generated by $\mathcal{D}$.\\
$\mathbb{N^+}$    &The set of positive natural numbers representing the FS index set in the spectrum domain lying in each directed fiber link $e \in E$. \\
$\alpha$ & $\alpha \in \mathbb{N^+}$ is the upper bound of the bandwidths required by requests. \\
$\beta$ & $\beta \in \mathbb{N^+}$ is the lower bound of the bandwidths required by requests. \\
$R_i^w$ & The number of contiguous FSs (bandwidth requirement) required by $R_i$, which is in the range of $[\alpha, \beta]$.\\
$\mathcal{P}_i$ & The set of all the directed paths from $s_i$ to $d_i$ in $G(V,E)$.\\
$P_i$ & $P_i \in \mathcal{P}_i$ is the directed path on which $R_i$ is routed.\\
$W_i$ &  The set of contiguous FSs assigned to $R_i$.\\
$R_i^b$ & $R_i^b \in \mathbb{N^+}$ is the start-index of $W_i$.\\
$R_i^a$ & $R_i^a \in \mathbb{N^+}$ is the end-index of $W_i$.\\
$GB$ & $GB \in \mathbb{N^+}$ is the number of FSs required for the guard-band.\\
MUFI &  $=\max_{s \in (\mathop{\cup} W_i)} (s)$, is the maximum used FS index.\\
\hline
$\hat{G}(\hat{V},\hat{E})$ & The conflict graph which is a weighted undirected graph, where $\hat{V}$ is the vertex set corresponding to $\mathcal{R}$, and $\hat{E}$ is the edge set.\\
$\hat{v}_i$ & $\hat{v}_i \in \hat{V}$ corresponds to $R_i$.\\
$\hat{v}^w_i$ & $=R_i^w$, the vertex weight of $\hat{v}_i$.\\
$W_{\hat{v}_i}$ &  The set of contiguous FSs assigned to $\hat{v}_i$.\\
$\hat{v}^b_i$ & $\hat{v}^b_i \in \mathbb{N^+}$ is the start-index of $W_{\hat{v}_i}$.\\
$\hat{v}^a_i$ & $\hat{v}^a_i \in \mathbb{N^+}$ is the end-index of $W_{\hat{v}_i}$.\\

$\hat{v}^w_{I(i)}$ & The $i$-th biggest vertex weight in $\hat{V}$ \textit{i.e.}, $\hat{v}^w_{I(1)} \geq \hat{v}^w_{I(2)} \geq ... \geq \hat{v}^w_{I(n)}$.\\
$\hat{v}^w_{D(i)}$ & The $i$-th smallest vertex weight in $\hat{V}$ \textit{i.e.}, $\hat{v}^w_{D(1)} \leq \hat{v}^w_{D(2)} \leq ... \leq \hat{v}^w_{D(n)}$.\\

$opt(\hat{G})$ & The optimal spectrum assignment for $\hat{G}(\hat{V},\hat{E})$.\\
$|opt(\hat{G})|$ & The MUFI of $opt(\hat{G})$, which is the optimal one.\\
 $\chi(\hat{G})$ & The chromatic number of $\hat{G}$.\\
\hline
$p$ & The intersecting probability of any two requests.  \\
$K$ & The number of candidate paths for each source-destination pair.  \\
$p_i$ & The probability for a request to be routed on the $i$-th shortest candidate path. \\ 
$\theta_{ij}$ &  The conflict coefficient that represents the intersecting probability of any two requests under the condition that one is routed on the $i$-th candidate path and the other on the $j$-th path of their own source-destination pairs.\\
 CM & $= [\theta_{ij}]$, the real symmetric Conflict Matrix composed of all $\theta_{ij}$.\\
\hline
\end{tabular}
\end{table}

\section{Network Model and Problem Description}
\label{sec:form}
In this section, we present the network model considered in this paper and the formulation of the RSA. Some necessary notations are summarized in Table \ref{tab: notations}.

\subsection{Network Model}

\subsubsection{Network Topology}
We use a directed graph $G(V,E)$ to represent the topology of an EON, where $V$ and $E$ denote the sets of nodes and directed fiber links respectively as in Fig. \ref{fig:zxfg}. A set of FSs lies on each directed fiber link as in Fig. \ref{fig:fg}. 

\begin{figure}[!htb]
\centering
 \includegraphics[height=0.11\textwidth]{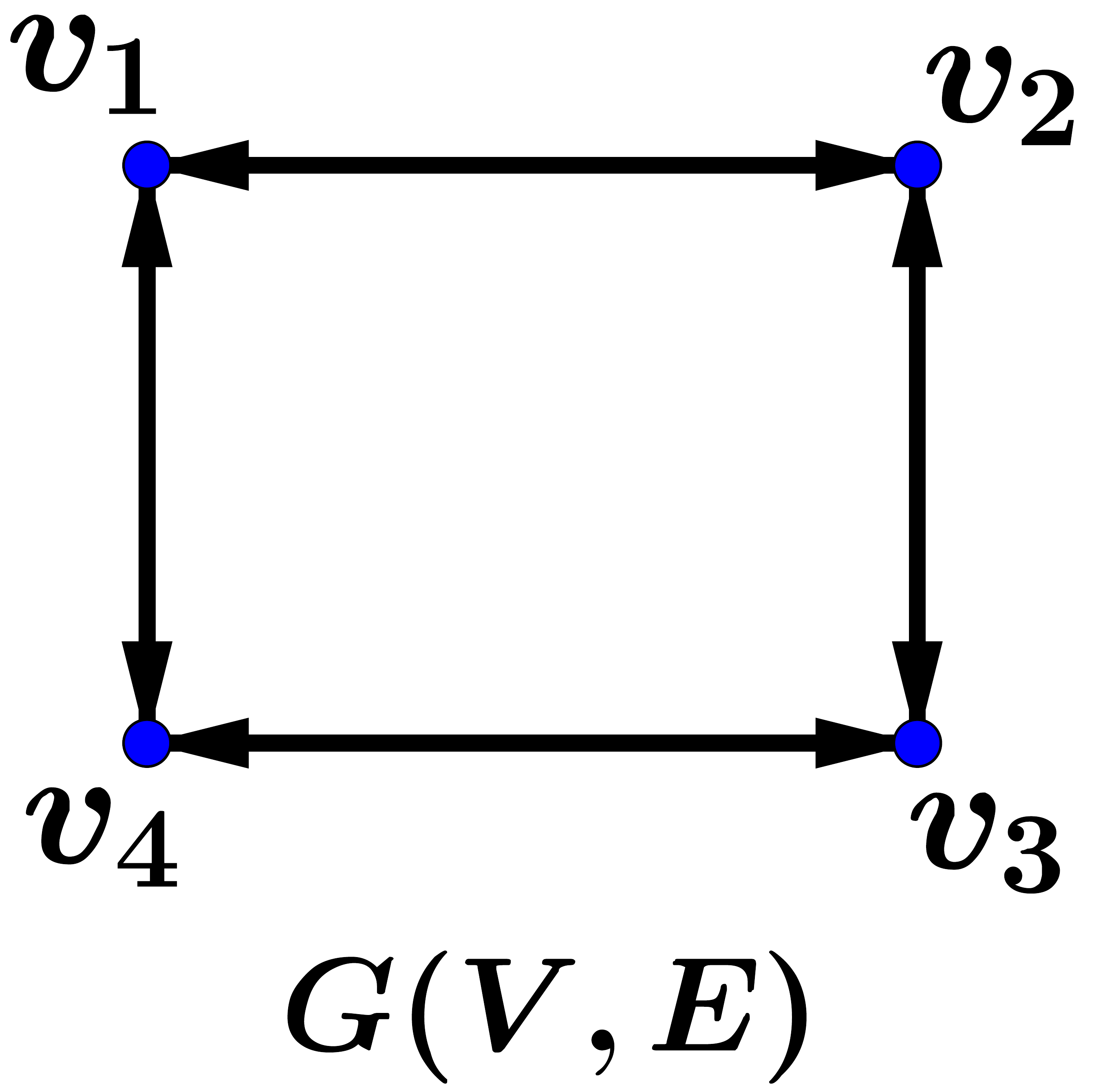}
\caption{An EON of 4 nodes and 4 bidirectional fiber links.}
\label{fig:zxfg}
\end{figure}

\begin{figure}[!htb]
\centering
 \includegraphics[height=0.13\textwidth]{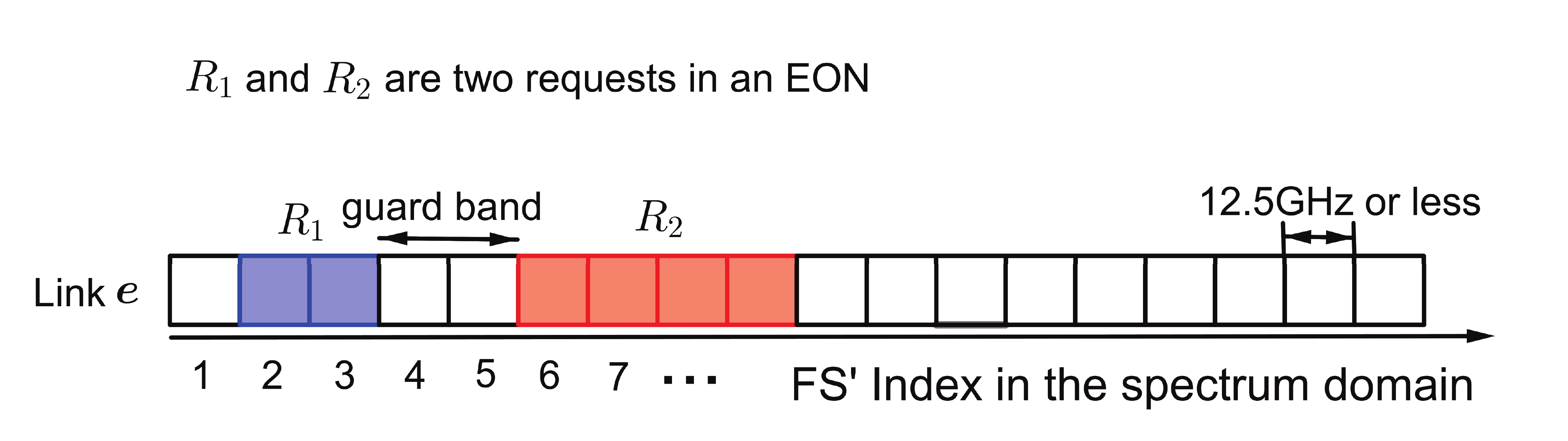}
\caption{FSs and guard-bands in a directed fiber link of an EON.}
\label{fig:fg}
\end{figure}

\subsubsection{Traffic Distribution}
A request $R$ in the EON $G(V,E)$ is a source-destination pair $(s \in V,d \in V)$ together with a bandwidth requirement $R^w$. Given an EON $G(V,E)$, there are $|V| \times (|V|-1)$ kinds of source-destination pairs. 
In this paper, we treat each request $R$ as a random variable whose  $(s,d)$ is generated from a traffic distribution $\mathcal{D}$. The $\mathcal{D}$ specially refers to the distribution of occurrence probabilities of the $|V| \times (|V|-1)$ source-destination pairs. In a backbone EON $G(V,E)$, the traffic distribution $\mathcal{D}$ usually has some statistical characteristics in a period. For example,
assuming that request $R$ is generated from the traffic distribution $\mathcal{D}$ in Table \ref{tab: TD}, the $(s,d)$ of the $R$ has $45\%$ probability to be $(v_1,v_3$) or $(v_3,v_1)$ and $1\%$ to be others, which implies that $v_1$ and $v_3$ are two import data center nodes in the EON.

\begin{table}[!htb]
\centering
  \caption{Traffic Distribution $\mathcal{D}$}
\label{tab: TD}
\resizebox{0.47\textwidth}{!}{\begin{tabular}{|C{1.2cm}|C{1.8cm}|C{1.2cm}|C{1.8cm}|}
\hline
Node Pair ($s$, $d$)  &  Occurrence Probability  & Node Pair ($s$, $d$) & Occurrence Probability \\
 \hline
 $(v_1,v_2)$   & $w_{v_1v_2}=1\%$ & $(v_2,v_1)$ & $w_{v_2v_1}=1\%$ \\
 \hline
$(v_1,v_3)$ & $w_{v_1v_3}=45\%$ & $(v_3,v_1)$ & $w_{v_3v_1}=45\%$ \\
\hline
$(v_1,v_4)$ & $w_{v_1v_4}=1\%$ & $(v_4,v_1)$ & $w_{v_4v_1}=1\%$\\
\hline
$(v_2,v_3)$ & $w_{v_2v_3}=1\%$ & $(v_3,v_2)$ & $w_{v_3v_2}=1\%$\\
\hline
$(v_2,v_4)$ & $w_{v_2v_4}=1\%$ & $(v_4,v_2)$ & $w_{v_4v_2}=1\%$\\
\hline
$(v_3,v_4)$ & $w_{v_3v_4}=1\%$ & $(v_4,v_3)$ & $w_{v_4v_3}=1\%$\\
\hline
\end{tabular}}
\end{table}

In this paper, we assume that there is a set of $n$ requests $\mathcal{R}=\{R_1,R_2,...,R_n\}$ to be served, and each $R_i(s_i,d_i)$ is generated from a given traffic distribution $\mathcal{D}$.

\subsubsection{Routing Scheme}

In general, the number of paths connecting a source-destination pair is exponential in an EON. It is impossible to enumerate all possible paths. Thus, a practical way is to pre-compute a set of $K$ ($K$ is a preset constant) shortest candidate paths for each source-destination pair by some $K$-shortest path algorithms, and select one from these candidate paths to connect the source-destination pair of the request \cite{ss4:b1,ss4:b2,s1:b2}. The most import thing for a routing scheme is how to appropriately dispense these $K$ candidate paths to these source-destination pairs of requests. We use $p_i$ to denote the percentage of requests which are routed on the $i$-th shortest candidate paths of their own source-destination pairs, and thus $ \sum^K_{i=1} p_i=1$. From the perspective of probability, since each request $R(s,d)$ is treated as a random variable in this paper, each $p_i$ can also be interpreted as the probability of the request routed on the $i$-th shortest candidate path (Since $\sum^K_{i=1} p_i=1$, this interpretation is well-defined). The proportion array $(p_1,p_2,...,p_K)$ is determined by the routing scheme.  Thus, in this paper, the routing scheme is represented by the array $(p_1,p_2,...,p_K)$. For example, the array $(1,0,...,0)$ represents the routing scheme that select the first shortest candidate paths to connect all source-destination pairs of requests, which corresponds to the shortest-path routing scheme in \cite{s6:b1}.



\subsection{Problem Description}
In this paper, we study the RSA problem with a planning concern, \emph{i.e.}, given a set of connection requests $\mathcal{R}$ in an EON $G(V,E)$, we intend to minimize the spectrum usage in the optical fibers. For each request $R_i(s_i,d_i) \in \mathcal{R}$, we need to establish a lightpath and assign enough bandwidths on it so as to forward the data of the request. More specially, the RSA problem consists in selecting a lightpath $P_i$ from the set $\mathcal{P}_i$ for $R_i$ and assigning just enough FS set $W_i$ on this lightpath while satisfying the following constraints:
\begin{itemize}
\item \textbf{Bandwidth Requirement Constraint}. The number of FSs assigned to each request should no smaller than its bandwidth requirement, \textit{i.e.}, the cardinality of $W_i$ assigned to $R_i$ must be equal to its weight:
	\begin{equation}\label{eqn: brc}
	|W_i| = R_i^w, \quad \forall R_i \in \mathcal{R}.
	\end{equation}
	
Without loss of generality, we make the following assumption in this paper.
	\begin{hyp}
		\label{hyp:rg}
		The bandwidth requirement of each request is uniformly distributed in the range of $[\alpha, \beta]$, \textit{i.e.}, $R^w_i \in [\alpha, \beta]~~\forall i$,  where $\alpha$ and $\beta$ are two constants, \textit{e.g.}, $[\alpha, \beta]=[1, 2]$ in \cite{s1:b2} or $[1, 3]$ in \cite{s2:b1}.
	\end{hyp}
	\item \textbf{Spectrum Contiguity Constraint}. The FSs assigned to request $R_i$ must be contiguous in the spectrum domain, \textit{i.e.}, $\mathbb{N^+}$. Thus, $W_i$ can be expressed as $\{R_i^b,R_i^b+1,...,R_i^a-1,R_i^a\}$. This is a physical layer constraint for all-optical communications.
   \item \textbf{Spectrum Continuity Constraint}. 
   All the directed fiber links on the lightpath for $R_i$ (\textit{i.e.}, $e \in P_i$) should  be assigned with the same set of contiguous FSs $W_i$.
   \item \textbf{Guard Band Constraint}.
To mitigate mutual interference, when $P_i$ and $P_j$ share common some directed fiber link(s), the distance between $W_i$ and $W_j$ in the spectrum domain should be no less than $GB$ (as shown in Fig. \ref{fig:fg}):
\begin{equation}\label{eqn: dr}
	distance(W_i,W_j) \geq GB, \quad \forall  P_i \cap P_j \neq \emptyset,
\end{equation}
where, $$distance(W_i,W_j)=\min
\limits_{\mbox{$\begin{array}{c}
s \in W_i, t \in W_j\end{array}$}} \left(|s-t|-1\right).$$
\end{itemize}


For planning purposes, the RSA problem studied in this paper aims to minimize the \textbf{Maximum Used FS Index} (MUFI), which can be expressed by Eq. (\ref{eqn: dsa}):
\begin{align}
\label{eqn: dsa}
\min \left[\max_{s \in (\mathop{\cup} W_i)} (s) \right] \qquad \qquad (\textbf{RSA}).
\end{align}

Given an EON $G$, a traffic distribution $\mathcal{D}$ and a routing scheme $(p_1,p_2,...,p_K)$, the studied objective of this paper is to figure out how the three factors impact on the optimal MUFI, $\textit{i.e.}$, the lower bound on the final spectrum usage as mentioned in the introduction.

\subsection{Conflict Graph}
After the set of requests $\mathcal{R}=\{R_1,R_2,...,R_n\}$, generated from the traffic distribution $\mathcal{D}$, arrive at the EON $G(V,E)$ and are routed on their lightpaths by the routing scheme $(p_1,p_2,...,p_K)$, we can construct an auxiliary graph, conflict graph, which embodies the three factors' impact on the optimal MUFI. The formal definition is given as follows.




\begin{definition}
\label{def:cg}
The \textit{conflict graph} \cite{sss:b1} $\hat{G}(\hat{V},\hat{E})$ is such \textbf{a weighted undirected graph} whose vertex set $\hat{V}$ represents the set of requests, \textit{i.e.}, $\mathcal{R}$. Any two vertices $\hat{v}_i, \hat{v}_j \in \hat{V}$ (representing $R_i$ and $R_j$ respectively) are connected by an edge $\hat{e} \in \hat{E}$, \textit{i.e.}, they are adjacent in $\hat{G}$, if and only if $P_i$ \textbf{intersects with} $P_j$, \textit{i.e.}, $P_i \cap P_j \neq \emptyset$ \textbf{(at least one directed fiber link shared by $P_i$ and $P_j$)}, where $P_i$ and $P_j$ are the lightpaths selected for $R_i$ and $R_j$ respectively. We denote by $\hat{v}^w_i$ the weight of vertex $\hat{v}_i$, and $\hat{v}^w_i=R^w_i$. 

Besides, we can also define \textit{proper spectrum assignment for the conflict graph}: Let $\hat{v}^b_i$, $\hat{v}^a_i$ and $W_{\hat{v}_i}$ have the same meanings as $R^b_i$, $R^a_i$ and $W_i$ respectively, and $|W_{\hat{v}_i}|=\hat{v}^w_i$. If $\hat{v}_i$ and $\hat{v}_j$ are adjacent in $\hat{G}$, then the distance between $W_{\hat{v}_i}$ and $W_{\hat{v}_j}$ should be no less than $GB$. Then, these $W_{\hat{v}_i}$ compose of a proper spectrum assignment for the conflict graph.
\end{definition}

\begin{figure}[!htb]
\centering
 \begin{subfigure}[Routed lightpaths.]{
        \label{subfig:tslpp}
		\includegraphics[height=0.187\textwidth]{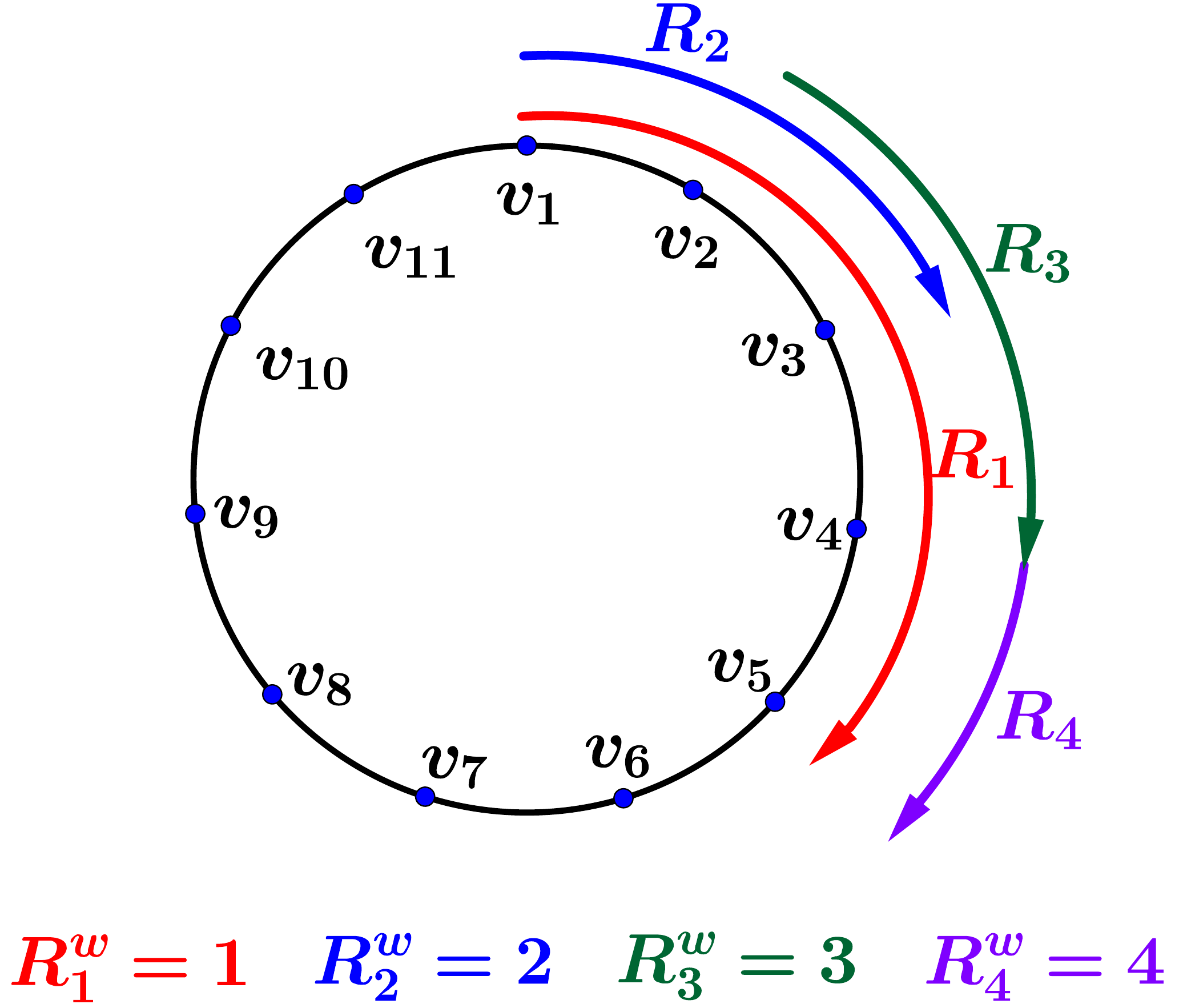}}
    \end{subfigure}
      \begin{subfigure}[Conflict graph.]{
        \label{subfig:tcgre}
		\includegraphics[height=0.19\textwidth]{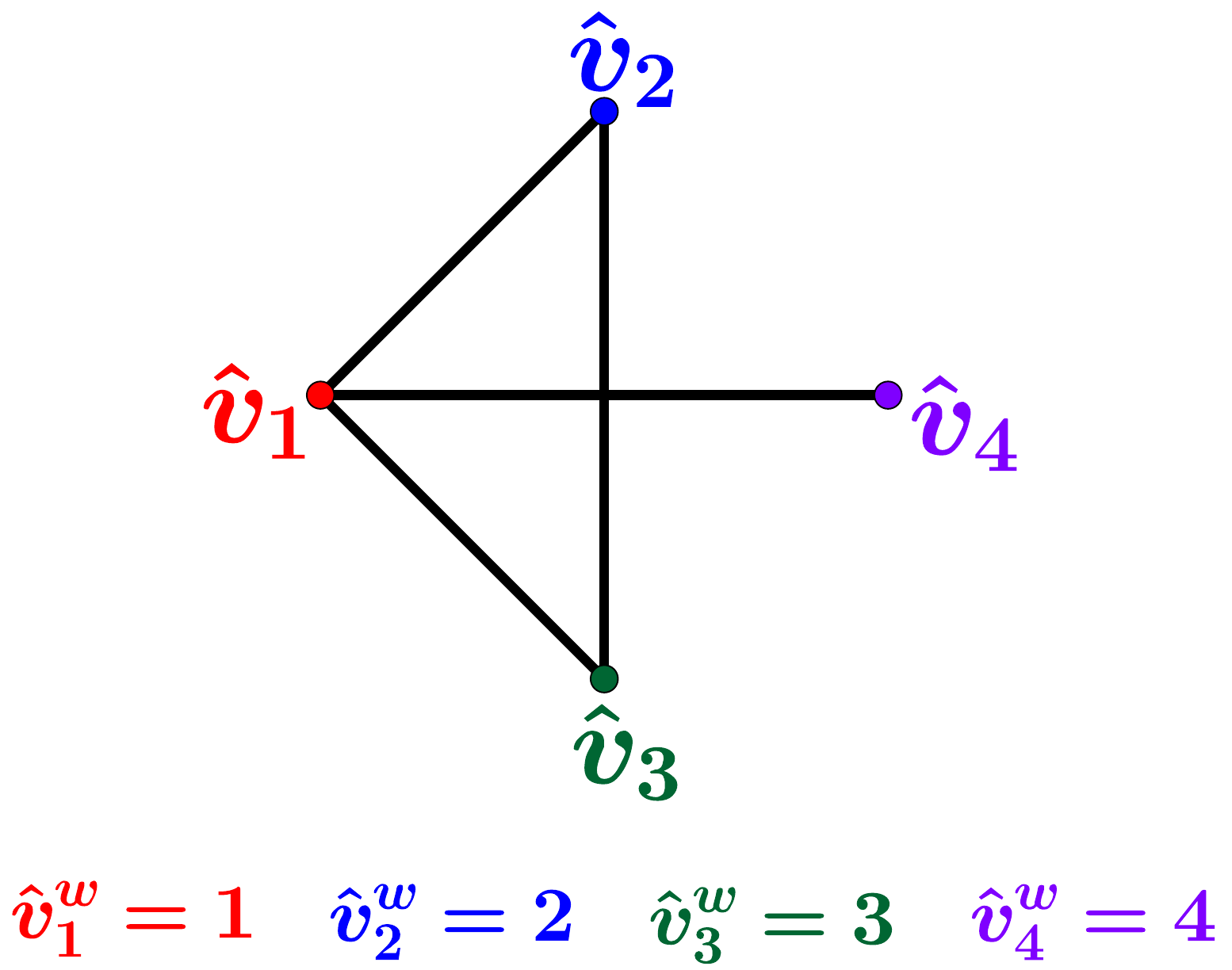}}
    \end{subfigure}
\caption{The conflict graph of the routed lightpaths.}
\label{fig:rfexam}
\end{figure}

Figure \ref{subfig:tcgre} showcases a 4-node conflict graph for the 4 requests (they are already routed on their lightpaths) in Fig. \ref{subfig:tslpp}, where $\hat{v}_i$ corresponds to $R_i, \forall 1\leq i\leq 4$. 

According to the definition, obviously, any proper spectrum assignment for the conflict graph corresponds to a proper spectrum assignment for these lightpaths, vice versa. Thus, the optimal spectrum assignment $opt(\hat{G})$ for the conflict graph, which  produces the minimum MUFI $|opt(\hat{G})|$, corresponds to the optimal spectrum assignment for these lightpaths. Therefore, the optimal MUFI is equal to $|opt(\hat{G})|$.

\section{Optimal MUFI and Chromatic Number of Conflict Graph}
\label{sec:themec}

In this section, we explore the relation between the $|opt(\hat{G})|$ and the chromatic number $\chi(\hat{G})$\footnote{The chromatic number is the minimum number of colors needed to color $\hat{G}$ such that adjacent vertices do not share a same color.} of  the conflict graph $\hat{G}$.

\subsection{Optimal MUFI and Chromatic Number}

In the WDM networks, the  parallel relation is that the minimum number of wavelengths used on the conflict graph is equal to its chromatic number. However, in EONs, how to determine the $|opt(\hat{G})|$ for the conflict graph $\hat{G}$ has not been investigated yet. Thus, we address this issue by giving Theorem \ref{the:chropt}.

\begin{theorem}
\label{the:chropt}
Assuming $\hat{G}(\hat{V},\hat{E})$ is the conflict graph, then
\begin{equation}
\footnotesize
\label{eqn: corebounds}
(\chi(\hat{G})-1)\cdot GB+\sum_{i=1}^{\chi(\hat{G})}\hat{v}^w_{D(i)} \leq |opt(\hat{G})|\leq (\chi(\hat{G})-1)\cdot GB+\sum_{i=1}^{\chi(\hat{G})}\hat{v}^w_{I(i)}.
\end{equation}
where $\hat{v}^w_{I(i)}$ and $\hat{v}^w_{D(i)}$ are the $i$-biggest and $i$-smallest vertex weight in $\hat{V}$ respectively, and $\chi(\hat{G})$ is the chromatic number of $\hat{G}$.
\end{theorem}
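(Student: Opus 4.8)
The plan is to reduce the spectrum-assignment problem on $\hat{G}$ to a proper vertex-coloring of $\hat{G}$, using $\chi(\hat{G})$ color classes, and then to convert a coloring into a concrete FS assignment by stacking the color classes in the spectrum domain with guard bands in between. For the upper bound, fix an optimal proper coloring of $\hat{G}$ with $\chi := \chi(\hat{G})$ colors, giving color classes $C_1,\dots,C_\chi$ (each an independent set). Within one class, no two vertices are adjacent, so their FS blocks need not respect any guard-band separation and may all be assigned the \emph{same} block of contiguous indices $\{1,\dots,m_1\}$, where $m_1 := \max_{\hat{v}_i\in C_1}\hat{v}^w_i$ suffices to host every vertex in $C_1$ (each needs $\hat{v}^w_i\le m_1$ contiguous FSs). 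Then leave a gap of $GB$ and assign class $C_2$ the block starting at $m_1+GB+1$ of width $m_2 := \max_{\hat{v}_i\in C_2}\hat{v}^w_i$, and so on. This yields a proper spectrum assignment (adjacent vertices lie in different classes, hence in blocks separated by at least $GB$) with MUFI equal to $(\chi-1)\cdot GB + \sum_{k=1}^{\chi} m_k$. Since each $m_k$ is a single vertex weight and the $m_k$ are distinct vertices, $\sum_{k=1}^{\chi} m_k \le \sum_{i=1}^{\chi}\hat{v}^w_{I(i)}$, the sum of the $\chi$ largest weights. Hence $|opt(\hat{G})| \le (\chi-1)\cdot GB+\sum_{i=1}^{\chi}\hat{v}^w_{I(i)}$.

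For the lower bound, I would argue contrapositively from an optimal spectrum assignment $opt(\hat{G})$. Given the intervals $W_{\hat{v}_i} = [\hat{v}^b_i,\hat{v}^a_i]$, I want to extract a proper coloring of $\hat{G}$ with few colors. The natural idea is a greedy/interval argument: sort vertices by $\hat{v}^b_i$ and color them so that vertices receiving the same color have pairwise-overlapping-free blocks that also observe the $GB$ spacing only when adjacent — but the cleanest route is to bound $\chi(\hat{G})$ by the number of blocks that stab a common point. Concretely, consider the FS index $|opt(\hat{G})|$ is distributed over the real line; a standard argument shows that if $\chi(\hat{G})$ vertices must receive pairwise-distinct, $GB$-separated blocks (which happens, e.g., along a clique, and more generally the chromatic number forces at least $\chi$ "layers"), then the total span is at least $(\chi-1)\cdot GB$ plus the sum of $\chi$ block widths, and these widths are at least the $\chi$ \emph{smallest} vertex weights $\hat{v}^w_{D(i)}$. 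Making "layers" precise: since $\hat{G}$ cannot be colored with $\chi-1$ colors, any assignment of blocks induces (via overlap) a graph whose coloring number is $\ge \chi$, forcing some point of the spectrum to be covered, in the guard-band-expanded sense, by $\ge \chi$ mutually conflicting blocks; the cheapest such configuration uses the $\chi$ lightest weights and $\chi-1$ guard bands, giving $|opt(\hat{G})| \ge (\chi-1)\cdot GB+\sum_{i=1}^{\chi}\hat{v}^w_{D(i)}$.

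The main obstacle is the lower bound: one must show rigorously that an optimal spectrum assignment \emph{cannot} do better than "$\chi$ stacked layers" in terms of total span. The subtlety is that EON blocks have variable widths and the guard band only applies between \emph{adjacent} vertices, so the interval-graph intuition from WDM (where $\chi$ equals clique cover along a cut) does not transfer verbatim; one needs to define the right "expanded interval" (e.g., inflate each $W_{\hat{v}_i}$ by $GB/2$ on each side) so that adjacency forces genuine interval intersection, then invoke that the resulting intersection structure still has chromatic number $\ge \chi(\hat{G})$, and finally that any point covered by $k$ inflated intervals witnesses a total-width lower bound. I would handle this by a careful sweep-line argument over the spectrum axis, tracking at each index the set of "active" vertices whose blocks (with guard bands) overlap there, and arguing that the maximum such active set has size $\ge \chi(\hat{G})$ because otherwise a valid $(\chi-1)$-coloring would exist — contradicting the definition of $\chi(\hat{G})$. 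Once that structural fact is in hand, the weight bound $\sum \hat{v}^w_{D(i)}$ and the $GB$ bookkeeping are routine.
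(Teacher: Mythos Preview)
Your upper-bound argument is correct and is essentially identical to the paper's: color optimally with $\chi:=\chi(\hat G)$ classes, stack the classes with $GB$ gaps, and bound the per-class width by the class maximum, hence by the $\chi$ largest weights.

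Your lower-bound proposal, however, has a genuine gap, and the direction of the inference is reversed. In a \emph{feasible} spectrum assignment, adjacency in $\hat G$ forces the two FS blocks to be at distance $\ge GB$; hence blocks that are close (overlapping after inflation by $GB/2$) belong to vertices that are \emph{non-adjacent} in $\hat G$. Consequently, each ``active set'' at a spectrum index is an \emph{independent set} of $\hat G$, not a clique. Finding a single large independent set says nothing about $\chi(\hat G)$ from below, so the contrapositive you sketch (``if every active set has size $<\chi$ then $\hat G$ admits a $(\chi-1)$-coloring'') does not follow: small active sets would give you a $(\chi-1)$-coloring of the associated interval graph, but $\hat G$ lives in the \emph{complement} of that interval structure, and a coloring of one does not transfer to the other. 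Your alternative phrasing, ``some point covered by $\ge\chi$ mutually conflicting blocks,'' is self-contradictory: three pairwise-conflicting (i.e., pairwise adjacent) vertices have pairwise $GB$-separated intervals, so their inflated intervals cannot share a common point.

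The paper's route is the opposite one and is what you actually need: starting from the optimal assignment, it \emph{constructs} a partition $\hat V=\mathcal F_1\cup\cdots\cup\mathcal F_k$ into independent sets by repeatedly taking $\hat v_j$ with the smallest end-index among the remaining vertices and setting $\mathcal F_j=\{\hat v:\hat v^b\le \hat v^a_j+GB\}$ (the feasibility constraint forces each $\mathcal F_j$ to be independent). This yields $k\ge\chi(\hat G)$, and the ``anchor'' vertices $\hat v_1,\dots,\hat v_k$ satisfy $\hat v^b_{j+1}>\hat v^a_j+GB$, so their widths plus $k-1$ guard bands already sit below $|opt(\hat G)|$; replacing those widths by the $k$ (hence $\chi$) smallest weights gives the stated lower bound. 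The key structural step you are missing is producing a \emph{partition} into few independent sets from the optimal assignment, not locating a single crowded point.
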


\begin{proof}
We prove the upper bound by finding a feasible spectrum assignment $f$ whose MUFI is not bigger than it. This solution $f$ can be obtained in such way as shown in Fig. \ref{fig:upper}: (1) Separating $\hat{V}$ into $\chi(\hat{G})$ disjoint independent sets\footnote{According to the definition of chromatic number, each monochrome part is an independent set, \textit{i.e.}, no two vertices of which are adjacent.} ; (2) assigning FSs for each independent set (The number of FSs assigned is equal to the biggest vertex weight in this set); (3) patching them up. It is easy to see the MUFI of $f$ is not bigger than the upper bound.

\begin{figure}[!htb]
\centering
 \includegraphics[height=0.3\textwidth]{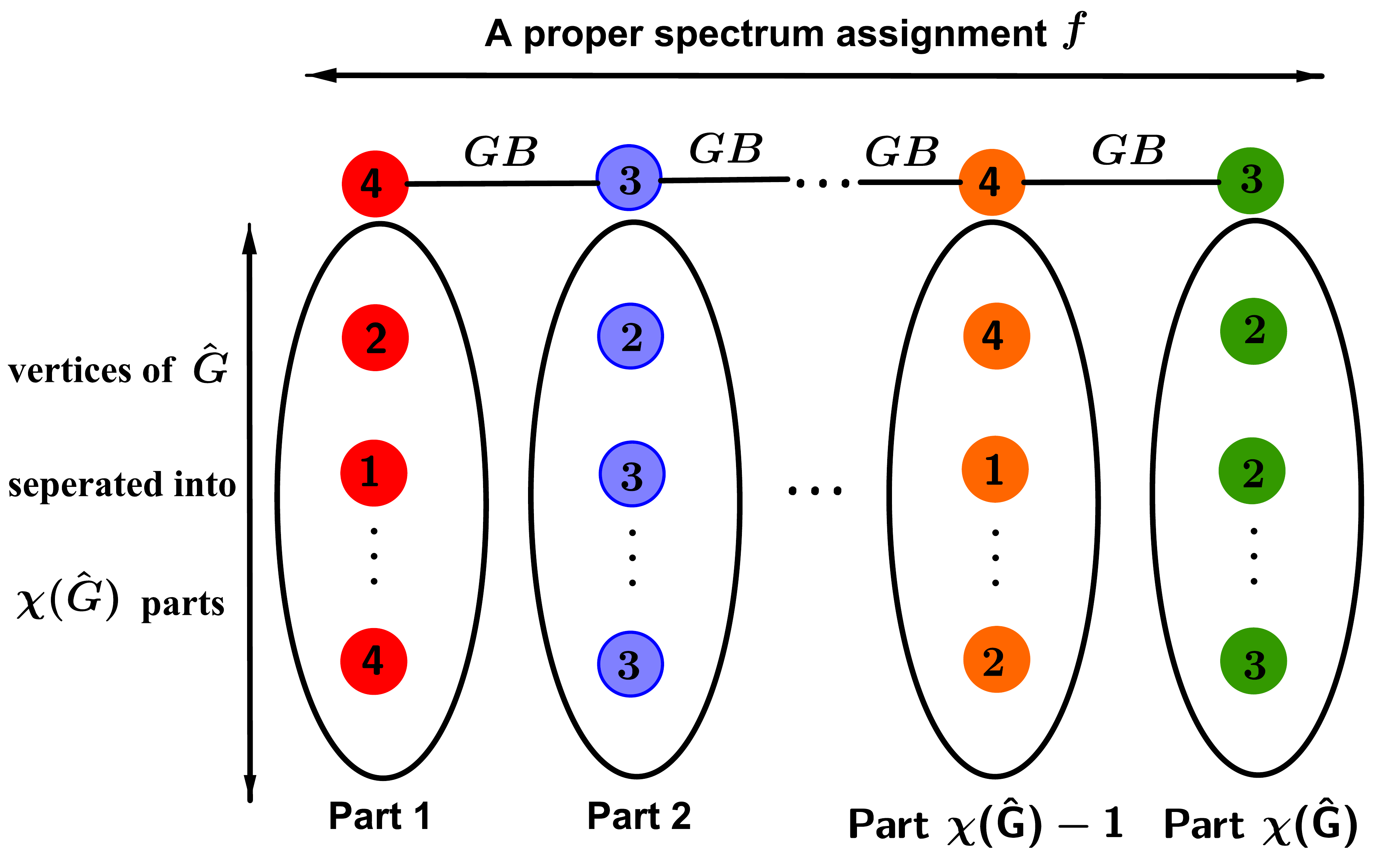}
\caption{A feasible spectrum assignment $f$ where each circle represents a vertex and the number in the circle is the corresponding vertex weight.}
\label{fig:upper}
\end{figure}

Now, we prove the lower bound. Let $opt(\hat{G})=\{W_{\hat{v}}, \forall \hat{v} \in  \hat{V}\}$  be the optimal spectrum assignment. With respect to $opt(\hat{G})$, let $A=\{\hat{v}^a, \forall \hat{v} \in \hat{V}\}$ and $B=\{\hat{v}^b, \forall \hat{v} \in \hat{V}\}$ be the optimal end-index and start-index sets for all $W_{\hat{v}}$ in $opt(\hat{G})$ respectively. We then can separate $\hat{V}$ into different parts as follows:

First, let, \textit{w.l.o.g.}, $\hat{v}_{1}$ be the vertex whose end-index $\hat{v}^a_1$ is the minimum in $A$, then we assert that $\mathcal{F}_1=\{\hat{v}|\hat{v}^b \leq \hat{v}^a_1+GB,\forall \hat{v} \in \hat{V} \}$ is an independent set of $\hat{G}$, \textit{i.e.}, for any two vertices $\hat{v}_i,\hat{v}_j \in \mathcal{F}_1$, $\hat{v}_i$ is not adjacent to $\hat{v}_j$ in $\hat{G}$. We prove it by contradiction. Assuming $\hat{v}_i$ is adjacent to $\hat{v}_j$ in $\hat{G}$ and $W_{\hat{v}_i}=[\hat{v}^b_i, \hat{v}^a_i]$ and $W_{\hat{v}_j}=[\hat{v}^b_j, \hat{v}^a_j]$, we have $distance(W_{\hat{v}_i},W_{\hat{v}_j}) \geq GB$. Thus, we, \textit{w.l.o.g.}, can assume $\hat{v}^a_i + GB < \hat{v}^b_j$. However, according to $\mathcal{F}_1$, we have $\hat{v}^b_j \leq \hat{v}^a_1+GB$ and $\hat{v}^a_i \geq \hat{v}^a_1$ (Since $\hat{v}^a_1$ is the minimum in all end-indices), which means  $\hat{v}^b_j \leq \hat{v}^a_i + GB$, a contradiction. Therefore $\mathcal{F}_1$ is an independent set of $\hat{G}$.

Next, let, \textit{w.l.o.g.}, $\hat{v}_2$ be the vertex whose end-index is the minimum in $\hat{V} \setminus  \mathcal{F}_1$, and similarly, we can assert $\mathcal{F}_2=\{\hat{v}|\hat{v}^b \leq \hat{v}^a_2+GB,\forall \hat{v} \in \hat{V}\setminus \mathcal{F}_1\}$ is an independent set of $\hat{G}$. After finite steps using the same technique, say $k$, we can separate $\hat{V}$ into $k$ disjoint independent sets: $\mathcal{F}_1$, $\mathcal{F}_2$,..., $\mathcal{F}_k$. Besides, according to the principle of selecting $\mathcal{F}_i, \forall 1 \leq i \leq k$, we have  $\hat{v}^a_i+GB < \hat{v}^b_{i+1}, \forall 1 \leq i \leq k$.  Therefore, we have $(k-1)\cdot GB+\sum_{i=1}^{k}\hat{v}^w_i \leq |opt(\hat{G})|$. Figure \ref{fig:lower} sketches the process above.

\begin{figure}[!htb]
\centering
 \includegraphics[height=0.085\textwidth]{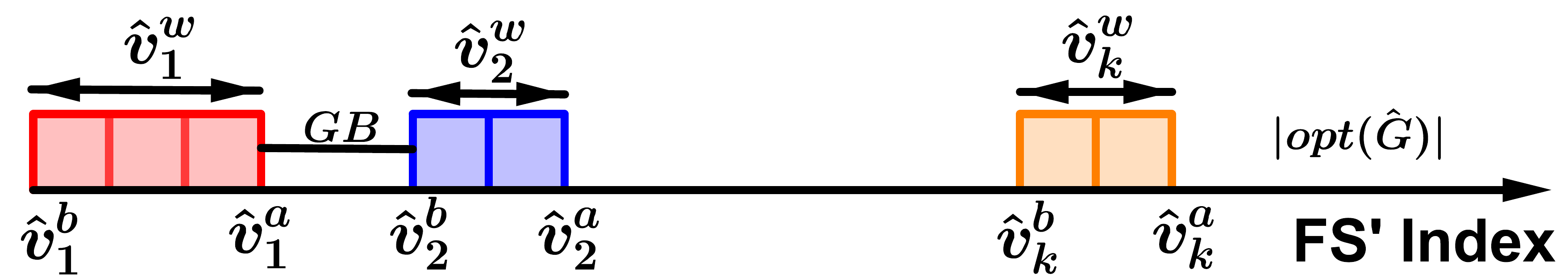}
\caption{The lower bound of $|opt(\hat{G})|$.}
\label{fig:lower}
\end{figure}

Finally, according to the definition of chromatic number that $\chi(G)$ is the minimum number of independent sets into which we can separate $\hat{V}$, so we immediately have $\chi(G) \leq k$. Besides $\sum_{i=1}^{k}\hat{v}^w_{D(i)} \leq \sum_{i=1}^{k}\hat{v}^w_i$, we hence have $(\chi(\hat{G})-1)\cdot GB+\sum_{i=1}^{\chi(\hat{G})}\hat{v}^w_{D(i)}  \leq  (k-1)\cdot GB+\sum_{i=1}^{k}\hat{v}^w_{D(i)}  \leq (k-1)\cdot GB+\sum_{i=1}^{k}\hat{v}^w_i \leq |opt(\hat{G})|$. 
Thus, we get the proof.
\end{proof}

Theorem \ref{the:chropt} reveals the relation between the chromatic number of the conflict graph $\hat{G}$ and the optimal MUFI $|opt(\hat{G})|$, which is a non-trivial generation from the RWA (Since in the RWA, the guard band constraint does not exists so $GB=0$, and each vertex weight is the same as a unit wavelength. Therefore, the lower bound is equal to the upper bound in Eq. (\ref{eqn: corebounds}) and the optimal number of wavelength is the chromatic number). Under Hypothesis \ref{hyp:rg} and Theorem \ref{the:chropt}, we can get Corollary \ref{cor: ulbs}.

\begin{corollary}
\label{cor: ulbs}
Assuming $\hat{G}(\hat{V},\hat{E})$ is the conflict graph and $\hat{v}^w \in [\alpha, \beta]$, then we have
\begin{equation}
	\footnotesize
	\label{eqnazer: corebounds}
(\chi(\hat{G})-1)\cdot GB+\chi(\hat{G})\cdot \alpha \leq |opt(\hat{G})|\leq (\chi(\hat{G})-1)\cdot GB+\chi(\hat{G})\cdot \beta.
\end{equation}

\end{corollary}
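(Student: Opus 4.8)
The plan is to derive Corollary \ref{cor: ulbs} as a direct consequence of Theorem \ref{the:chropt} together with Hypothesis \ref{hyp:rg}. First I would observe that Hypothesis \ref{hyp:rg} guarantees $\hat{v}^w_i = R^w_i \in [\alpha,\beta]$ for every vertex $\hat{v}_i \in \hat{V}$, so in particular every one of the $i$-smallest weights satisfies $\hat{v}^w_{D(i)} \geq \alpha$ and every one of the $i$-biggest weights satisfies $\hat{v}^w_{I(i)} \leq \beta$.

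Next, I would take the lower bound in Eq. (\ref{eqn: corebounds}) and bound its summation from below term by term: $\sum_{i=1}^{\chi(\hat{G})}\hat{v}^w_{D(i)} \geq \sum_{i=1}^{\chi(\hat{G})}\alpha = \chi(\hat{G})\cdot\alpha$. Substituting this into the left inequality of Theorem \ref{the:chropt} yields $(\chi(\hat{G})-1)\cdot GB + \chi(\hat{G})\cdot\alpha \leq (\chi(\hat{G})-1)\cdot GB + \sum_{i=1}^{\chi(\hat{G})}\hat{v}^w_{D(i)} \leq |opt(\hat{G})|$, giving the desired lower bound in Eq. (\ref{eqnazer: corebounds}). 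Symmetrically, bounding the summation in the upper bound of Eq. (\ref{eqn: corebounds}) from above, $\sum_{i=1}^{\chi(\hat{G})}\hat{v}^w_{I(i)} \leq \chi(\hat{G})\cdot\beta$, and substituting gives $|opt(\hat{G})| \leq (\chi(\hat{G})-1)\cdot GB + \chi(\hat{G})\cdot\beta$, which is the upper bound. Chaining the two inequalities completes the proof.

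Honestly, there is no real obstacle here: the corollary is a routine specialization obtained by replacing each data-dependent weight in the sums of Theorem \ref{the:chropt} by the uniform constant bounds $\alpha$ and $\beta$ furnished by the hypothesis. The only thing to be careful about is that there are exactly $\chi(\hat{G})$ terms in each sum (which is why the coefficients become $\chi(\hat{G})\cdot\alpha$ and $\chi(\hat{G})\cdot\beta$ rather than anything else), and that $\chi(\hat{G}) \leq n$ so the indices $D(i)$ and $I(i)$ are well defined for $1 \leq i \leq \chi(\hat{G})$; this is implicit in Theorem \ref{the:chropt} already. I would write the argument in two or three lines.
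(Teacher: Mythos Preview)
Your proposal is correct and matches the paper's approach exactly: the paper states Corollary~\ref{cor: ulbs} as an immediate consequence of Hypothesis~\ref{hyp:rg} and Theorem~\ref{the:chropt} without writing out a proof, and your argument is precisely the routine termwise bounding of the sums $\sum_{i=1}^{\chi(\hat{G})}\hat{v}^w_{D(i)}$ and $\sum_{i=1}^{\chi(\hat{G})}\hat{v}^w_{I(i)}$ by $\chi(\hat{G})\cdot\alpha$ and $\chi(\hat{G})\cdot\beta$ that the paper leaves implicit.
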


In practice, as $\hat{v}^w_{D(i)}$ and $\hat{v}^w_{I(n-i)}$ will become closer with the growth of $i$, the gap between the upper and lower bounds in Theorem \ref{the:chropt} will be smaller than that in Corollary \ref{cor: ulbs}. In other words, the $|opt(\hat{G})|$ is limited in a narrow interval bounded by the chromatic number $\chi(\hat{G})$. Thus, $\chi(\hat{G})$ directly determines  $|opt(\hat{G})|$.   


\subsection{Some Results on Approximation Ratios}
After the lightpath routing, the conflict graph $\hat{G}(\hat{V},\hat{E})$ is constructed. Now, a natural question is how to obtain a good spectrum assignment to approach $|opt(\hat{G})|$. Since the $|opt(\hat{G})|$ is bounded by the chromatic number $\chi(\hat{G})$,  we can get an approximation algorithm (say $APX$) of spectrum assignment from a graph coloring algorithm (say $COL$) by slightly modifying it as follows: (1) Using $COL$ to partition the vertex set $\hat{V}$ into $k$ disjoint independent sets (each set in a monochrome), \textit{i.e.}, $C_1$, $C_2$,...,$C_k$, where $k$ is the chromatic number produced by $COL$; (2) using the same way that we obtain the upper bound in Fig. \ref{fig:upper} to assign FS sets. We get Corollary \ref{cor: apxi}.

\begin{corollary}
\label{cor: apxi}
Given a conflict graph $\hat{G}(\hat{V}, \hat{E})$, if there is a polynomial algorithm $COL$ which can guarantee a $\rho$ approximation ratio for chromatic number, then there is a polynomial algorithm $APX$ which can guarantee a $\rho\cdot \max\{\frac{\beta}{\alpha},2\}$ approximation ratio for $|opt(\hat{G})|$, where $\alpha$ and $\beta$ are defined in Hypothesis \ref{hyp:rg}.
\end{corollary}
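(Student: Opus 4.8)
The plan is to track exactly what the algorithm $APX$ produces and then compare it, term by term, against the lower bound for $|opt(\hat{G})|$ supplied by Corollary~\ref{cor: ulbs}. Concretely: run $COL$ on $\hat{G}$ to obtain a proper coloring into $k$ independent sets $C_1,\dots,C_k$, and then, exactly as in the upper-bound construction of Fig.~\ref{fig:upper}, give color class $C_j$ a contiguous block of $\max_{\hat{v}\in C_j}\hat{v}^w$ slices and concatenate the $k$ blocks leaving a guard band $GB$ between consecutive ones. This $APX$ is polynomial because $COL$ is, and the allocation step is linear in $|\hat{V}|$. Its cost is $|APX(\hat{G})| = (k-1)\,GB + \sum_{j=1}^{k}\max_{\hat{v}\in C_j}\hat{v}^w \le (k-1)\,GB + k\beta$, where the inequality uses $\hat{v}^w\le\beta$ from Hypothesis~\ref{hyp:rg}. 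Since $COL$ is a $\rho$-approximation for the chromatic number (a minimization quantity, so $\rho\ge1$), the number of colors satisfies $k\le\rho\,\chi(\hat{G})$; and by Corollary~\ref{cor: ulbs}, $|opt(\hat{G})|\ge(\chi(\hat{G})-1)\,GB+\chi(\hat{G})\,\alpha$. Writing $\chi:=\chi(\hat{G})$, it therefore suffices to show
\[
\frac{(k-1)\,GB+k\beta}{(\chi-1)\,GB+\chi\alpha}\;\le\;\rho\max\{\beta/\alpha,\,2\}.
\]

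The tool I would use for the last step is the mediant inequality: for $p,r\ge0$ and $q,s>0$ one has $\frac{p+r}{q+s}\le\max\{p/q,\,r/s\}$. Since $GB\in\mathbb{N^+}$, in the main case $\chi\ge2$ we may take $p=(k-1)GB$, $q=(\chi-1)GB$, $r=k\beta$, $s=\chi\alpha$, which gives
\[
\frac{(k-1)GB+k\beta}{(\chi-1)GB+\chi\alpha}\;\le\;\max\!\left\{\frac{k-1}{\chi-1},\,\frac{k\beta}{\chi\alpha}\right\}\;\le\;\max\!\left\{\frac{\rho\chi-1}{\chi-1},\,\rho\cdot\frac{\beta}{\alpha}\right\},
\]
using $k\le\rho\chi$ in each fraction. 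It then remains to observe $\frac{\rho\chi-1}{\chi-1}\le2\rho$ for every integer $\chi\ge2$: rearranged, this reads $\chi\ge2-1/\rho$, which holds since $\rho\ge1$ forces $2-1/\rho\le2\le\chi$. Combining the two bounds, the ratio is at most $\max\{2\rho,\ \rho\beta/\alpha\}=\rho\max\{\beta/\alpha,2\}$, as claimed.

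Two loose ends remain. First, the degenerate case $\chi\le1$: if $\hat{V}=\emptyset$ the statement is vacuous, and if $\hat{G}$ is edgeless then a single color suffices and the construction gives $|APX(\hat{G})|=\max_{\hat{v}}\hat{v}^w=|opt(\hat{G})|$, so the ratio is $1$; here one either assumes $COL$ behaves sensibly on edgeless inputs or simply states the bound for the nontrivial regime $\chi\ge2$. Second — and this, not the degenerate case, is where I expect the only real care is needed — the guard-band term must not be allowed to compound with the weight term. A crude additive split, bounding $(k-1)GB$ and $k\beta$ against $|opt(\hat{G})|$ separately, costs a factor on the order of $2\rho-1+\rho\beta/\alpha$, which is strictly worse than $\rho\max\{\beta/\alpha,2\}$; the mediant inequality (equivalently, a case split on whether the $GB$-ratio or the weight-ratio dominates) is precisely what keeps the two contributions from adding up. So the whole argument is short, and the only delicate point is packaging the estimate multiplicatively through the mediant bound rather than additively — every individual inequality involved is elementary.
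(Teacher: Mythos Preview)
Your proof is correct and follows essentially the same route as the paper's: both use the lower bound of Corollary~\ref{cor: ulbs}, both upper-bound $|APX(\hat{G})|$ by $(k-1)GB+k\beta$, both apply the mediant inequality (the paper does so implicitly) to split the ratio as $\max\{\frac{k-1}{\chi-1},\,\rho\beta/\alpha\}$, and both finish by bounding $\frac{k-1}{\chi-1}\le 2\rho$ under $\chi\ge 2$. The only cosmetic difference is that the paper reaches $2\rho$ via $\frac{k-1}{\chi-1}\le \frac{2k}{\chi}\le 2\rho$, whereas you go through $\frac{\rho\chi-1}{\chi-1}\le 2\rho$; both are one-line inequalities.
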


\begin{proof}
Since $\chi(\hat{G})=1$ is a trivial case, we just consider $\chi(\hat{G}) \geq 2$.
Supposing that $COL$ separates $\hat{V}$ into $C_1, C_2, ..., C_k$, let, \textit{w.l.o.g.}, $\hat{v}_1, \hat{v}_2, ..., \hat{v}_k$ be the vertices with the biggest vertex weight in each independent set respectively (just like that in Fig. \ref{fig:upper}). Then according to the construction of $APX$, we have $|APX(\hat{G})| \leq (k-1)\cdot GB+\sum_{i=1}^{k}\hat{v}^w_i$, where $|APX(\hat{G})|$ is the MUFI computed by $APX$. According to Corollary \ref{cor: ulbs} and the deduction above, we have $\dfrac{|APX(\hat{G})|}{|opt(\hat{G})|} \leq \dfrac{(k-1)\cdot GB+k\cdot \beta}{(\chi(\hat{G})-1)\cdot GB+\chi(\hat{G})\cdot \alpha} \leq \max \{\rho\cdot \frac{\beta}{\alpha}, \frac{k-1}{\chi(\hat{G})-1)}\}$. Since $\chi(\hat{G}) \geq 2$, $\frac{k-1}{\chi(\hat{G})-1} \leq 2\cdot \frac{k}{\chi(\hat{G})}=2\cdot \rho$. Hence, $\dfrac{|APX(\hat{G})|}{|opt(\hat{G})|} \leq \rho\cdot \max\{\frac{\beta}{\alpha},2\}$.
\end{proof}
For some special conflict graphs, their chromatic number can be solved in polynomial time. Here, we have Corollary \ref{cor: aprii}.
\begin{corollary}
\label{cor: aprii}
If the conflict graph $\hat{G}(\hat{V},\hat{E})$ is a perfect graph \footnote{Perfect graphs are an important graph class in graph theory which have many good properties such as their chromatic numbers can be polynomial-timely obtained.}, then there is a polynomial algorithm which can guarantee a $\frac{\beta}{\alpha}$ approximation ratio for $|opt(\hat{G})|$.
\end{corollary}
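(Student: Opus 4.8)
The plan is to invoke Corollary~\ref{cor: apxi} with an \emph{exact} coloring algorithm, which perfect graphs admit, and then sharpen the resulting ratio by a short direct computation. First I would recall that, by the Gr\"otschel--Lov\'asz--Schrijver theory (the Lov\'asz theta function together with the ellipsoid method), for a perfect graph the chromatic number $\chi(\hat{G})$ and an optimal partition of $\hat{V}$ into exactly $\chi(\hat{G})$ independent sets $C_1,\dots,C_{\chi(\hat{G})}$ can both be computed in polynomial time. Hence the hypothesis of Corollary~\ref{cor: apxi} holds with $\rho=1$, and the algorithm $APX$ constructed there actually partitions $\hat{V}$ into $k=\chi(\hat{G})$ independent sets, assigns to each the bandwidth of its heaviest vertex, and patches the blocks together with guard bands exactly as in Fig.~\ref{fig:upper}.

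Next I would bound $|APX(\hat{G})|$ and $|opt(\hat{G})|$. As in the proof of Corollary~\ref{cor: apxi}, the construction yields
$$|APX(\hat{G})| \leq (\chi(\hat{G})-1)\cdot GB + \sum_{i=1}^{\chi(\hat{G})} \hat{v}^w_i \leq (\chi(\hat{G})-1)\cdot GB + \chi(\hat{G})\cdot\beta,$$
using $\hat{v}^w\in[\alpha,\beta]$, while Corollary~\ref{cor: ulbs} gives the lower bound $|opt(\hat{G})| \geq (\chi(\hat{G})-1)\cdot GB + \chi(\hat{G})\cdot\alpha$. The decisive point, which is exactly where the gain over the generic factor $\max\{\beta/\alpha,2\}$ comes from, is that since the coloring is optimal we have $k=\chi(\hat{G})$ rather than merely $k\le 2\chi(\hat{G})$; there is no longer an extra factor of $2$ hiding in $\tfrac{k-1}{\chi(\hat{G})-1}$.

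It then remains to verify $\dfrac{(\chi(\hat{G})-1)\cdot GB + \chi(\hat{G})\cdot\beta}{(\chi(\hat{G})-1)\cdot GB + \chi(\hat{G})\cdot\alpha} \leq \dfrac{\beta}{\alpha}$. Writing $a=(\chi(\hat{G})-1)\cdot GB\geq 0$ and $m=\chi(\hat{G})\geq 1$, this is the elementary mediant-type inequality $\dfrac{a+m\beta}{a+m\alpha} \leq \dfrac{\beta}{\alpha}$, which upon cross-multiplying reduces to $\alpha a \leq \beta a$, true since $\beta\geq\alpha$ and $a\geq 0$. The trivial case $\chi(\hat{G})=1$ (no edges in $\hat{G}$) is handled separately: there $|APX(\hat{G})|=|opt(\hat{G})|=\max_i \hat{v}^w_i$, so the ratio is $1\leq\beta/\alpha$. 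Chaining the three displays gives the claimed $\tfrac{\beta}{\alpha}$ approximation ratio, and the whole algorithm runs in polynomial time.

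I expect the argument to be essentially routine; the only steps needing care are citing correctly that an optimal coloring of a perfect graph is polynomial-time computable, and making explicit the observation that it is precisely the \emph{optimality} of the coloring (not just a constant-factor approximation to $\chi$) that eliminates the factor $2$ and leaves the clean bound $\beta/\alpha$.
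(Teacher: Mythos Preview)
Your proposal is correct and follows essentially the same approach as the paper: invoke polynomial-time optimal coloring for perfect graphs to get $\rho=1$ and $k=\chi(\hat{G})$, then observe that the bound from Corollary~\ref{cor: apxi}'s proof collapses to $\beta/\alpha$. The only cosmetic difference is that the paper plugs $\rho=1$ directly into the intermediate inequality $\max\{\rho\cdot\beta/\alpha,\ (k-1)/(\chi(\hat{G})-1)\}$ from the proof of Corollary~\ref{cor: apxi}, whereas you redo the mediant-type estimate $\frac{a+m\beta}{a+m\alpha}\le\frac{\beta}{\alpha}$ from scratch; both arrive at the same conclusion.
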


\begin{proof}
According to \cite{s3:b1}, the chromatic numbers of perfect graphs can be solved in polynomial time, \textit{i.e.}, there is an algorithm such that $\rho=1$. In the proof of Corollary \ref{cor: apxi}, we have $\dfrac{|APX(\hat{G})|}{|opt(\hat{G})|} \leq \max \{\rho\cdot \frac{\beta}{\alpha}, \frac{k-1}{\chi(\hat{G})-1)}\}$. By $\rho=\frac{k}{\chi(\hat{G})}=1$, and $\frac{\beta}{\alpha} \geq 1$,  we get this Corollary.
\end{proof}

In EONs with a tree topology, the conflict graphs over them are always perfect graphs. Then, we have:
\begin{corollary}
\label{cor: apriii}
If EON $G(V,E)$ is a tree, there is a polynomial algorithm which can guarantee a $\frac{\beta}{\alpha}$ ratio for the optimal MUFI.
\end{corollary}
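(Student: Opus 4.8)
The plan is to reduce Corollary \ref{cor: apriii} to Corollary \ref{cor: aprii} by showing that whenever the EON $G(V,E)$ is a tree, every conflict graph $\hat{G}(\hat{V},\hat{E})$ that can arise over it is a perfect graph. Once that structural fact is in hand, Corollary \ref{cor: aprii} immediately hands us a polynomial-time algorithm with a $\frac{\beta}{\alpha}$ approximation ratio for $|opt(\hat{G})|$, and since $|opt(\hat{G})|$ equals the optimal MUFI (as established right after Definition \ref{def:cg}), the claim follows.

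\textbf{Identifying the conflict graph as an interval-type graph.} First I would observe that in a tree, the (undirected) route between any source-destination pair is the \emph{unique} simple path between them; after fixing the routing scheme each request $R_i$ is assigned this unique path $P_i$, a subpath of the tree. Two requests conflict exactly when their subpaths share at least one (directed) edge. Thus $\hat{G}$ is precisely the intersection graph of a family of subpaths of a tree --- an \emph{EPT}-type or, more to the point, a \emph{subtree-of-a-tree} intersection graph. Intersection graphs of subtrees of a tree are exactly the chordal graphs (a classical theorem of Gavril / Buneman / Walter), and chordal graphs are perfect. One subtlety to address: our links are directed, so "sharing a directed fiber link" is the conflict condition rather than merely sharing an undirected edge or a vertex; I would note that restricting to one direction only shrinks the intersections, and the family of directed subpaths in a fixed orientation is still a family of subtrees of the underlying (di)tree, so the intersection graph remains chordal. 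Alternatively, and perhaps more cleanly for a directed tree where data flows along a consistent orientation, one can argue directly that $\hat{G}$ is chordal by exhibiting a perfect elimination ordering (e.g. repeatedly remove a request whose path is "deepest"/inclusion-minimal-toward-a-leaf), which makes its neighborhood a clique.

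\textbf{Assembling the corollary.} With $\hat{G}$ shown to be perfect (being chordal), Corollary \ref{cor: aprii} applies verbatim: there is a polynomial-time algorithm achieving ratio $\frac{\beta}{\alpha}$ for $|opt(\hat{G})|$. Because the optimal spectrum assignment on $\hat{G}$ corresponds to the optimal spectrum assignment on the lightpaths, the optimal MUFI equals $|opt(\hat{G})|$, so the same algorithm gives a $\frac{\beta}{\alpha}$-approximation for the optimal MUFI on the tree EON. I would also remark that the graph-coloring / interval-completion steps (finding the chordal structure, computing $\chi(\hat{G})$, producing the independent-set partition) are all polynomial, so the overall routine --- compute the unique paths, build $\hat{G}$, color it optimally, then patch FS blocks as in Fig. \ref{fig:upper} --- runs in polynomial time.

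\textbf{Main obstacle.} The only real work is the structural lemma that the conflict graph over a tree is perfect; everything else is bookkeeping. Within that, the delicate point is handling the \emph{directed} nature of the fiber links and the guard-band-driven notion of conflict: I must make sure that "intersect" in Definition \ref{def:cg} (at least one shared directed fiber link) still yields a subtree-intersection / chordal graph rather than some broader class. I expect this to go through --- a family of directed subpaths in an oriented tree is still a family of subtrees of a tree, so Gavril's characterization still applies --- but it is the step that needs care, and it is worth stating explicitly as a short lemma before invoking Corollary \ref{cor: aprii}.
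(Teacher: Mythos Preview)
Your proposal is correct and follows essentially the same route as the paper: note that routing in a tree is unique, argue via intersection-graph theory that the resulting conflict graph is chordal (hence perfect), and then invoke Corollary \ref{cor: aprii}. The paper's proof is terser---it simply asserts chordality by citing intersection-graph theory---while you explicitly name the Gavril/Buneman/Walter characterization and flag the directed-link subtlety as the point needing care, but the underlying argument is identical.
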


\begin{proof}
Because the $G(V,E)$ is a tree, there is only one way to route each request, \textit{i.e.}, the conflict graph is unique. Besides, according to intersection graph theory \cite{s3:b2}, the conflict graphs over a tree network are always chordal graphs which belong to the perfect graph class. Therefore, the proof follows.
\end{proof}
However, in general cases, for the conflict graph $\hat{G}(\hat{V},\hat{E})$, it is intractable to approximate $\chi(\hat{G})$ within a ratio of $|\hat{V}|^{1-\epsilon}$ for any constant $\epsilon > 0$ \cite{s2:b3}. In other words, compared with $\chi(\hat{G})$, the final result could be very bad for any polynomial-time algorithm. Now, the best approximation ratio $\rho$ for $\chi(\hat{G})$ is $\mathcal{O}(|\hat{V}|\cdot (\log\log|\hat{V}|)^2/\log(|\hat{V}|)^3)$ \cite{s3:b4}.

From above discussions, we can see that the spectrum assignment, closely analogous to the graph coloring problem, is intractable, and the $|opt(\hat{G})|$ is determined by the conflict graph. Therefore, the three factors of the network topology, traffic distribution and routing scheme are critical for the final performance of the RSA, since they decide the property of the conflict graph $\hat{G}$.  How to reduce the chromatic number $\chi(\hat{G})$, thereby the $|opt(\hat{G})|$, is pivotal.

\section{Theoretical Chains of the Impact}
\label{sec: tomufiandip}

In the previous section, we theoretically deduced the impact of the $\chi(\hat{G})$ on the $|opt(\hat{G})|$. Now, the important thing is to study how the network topology $G$, traffic distribution $\mathcal{D}$ and routing scheme $(p_1,p_2,...,p_K)$ affect the $\chi(\hat{G})$.  

To this end, we need some preparations in random graph theory. We first introduce the concept of intersecting probability and its relation with the chromatic number. Then, we deduce how the intersecting probability can be influenced by the network topology $G$, traffic distribution $\mathcal{D}$ and routing scheme $(p_1,p_2,...,p_K)$. Thus, by the connection between the intersecting probability and the chromatic number, we finally figure out the impacts of the three factors on the optimal spectrum usage.

\subsection{Intersecting Probability and Chromatic Number}





\begin{definition}
In the EON $G(V,E)$, the \textit{intersecting probability} (denoted by $p$) of any two requests (say $R_1$ and $R_2$), generated from $\mathcal{D}$ and routed by $(p_1,p_2,...,p_K)$, is the probability that their lightpaths $P_1$ and $P_2$ share at least one common fiber link, \textit{i.e.}, $P_1 \cap P_2 \neq \emptyset$.
\end{definition}

Given $\mathcal{R}=\{R_1,R_2,...,R_n\}$, the vertex set of the conflict graph $\hat{G}(\hat{V},\hat{E})$ is that $\hat{V}=\{\hat{v}_1,\hat{v}_2,...,\hat{v}_n\}$. Thus, the edge set $\hat{E}$ determines the final $\chi(\hat{G})$. For any two vertices, say $\hat{v}_1, \hat{v}_2 \in \hat{V}$, the probability that they are adjacent in the conflict graph $\hat{G}$ is equal to the intersecting probability $p$. We introduce an important Lemma in random graph theory which reveals the relation between the intersecting probability $p$ and the chromatic number $\chi(\hat{G})$.

\begin{lemma}\cite{s3:b5}
\label{lem:kp}
Let $|\hat{V}|=n$, $p$ be the intersecting probability, and $\hat{G}$ be the conflict graph, then we have $\chi(\hat{G})=(\frac{1}{2}+o(1))\cdot \log(\frac{1}{1-p})\cdot \frac{n}{\log n}$, where $o(1)$ is an infinitesimal of $n$. 
\end{lemma}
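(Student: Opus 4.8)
The plan is to treat the conflict graph $\hat G$ as the Erd\H{o}s--R\'enyi random graph $G(n,p)$ on $n=|\hat V|$ vertices, in which each pair $\{\hat v_i,\hat v_j\}$ is present independently with probability equal to the intersecting probability $p$; the assertion is then exactly the classical Bollob\'as law for the chromatic number of dense random graphs. Setting $b=\frac{1}{1-p}$, so that $\log_b m = \log m / \log\!\big(\frac{1}{1-p}\big)$, the target $\chi(\hat G)=\big(\frac12+o(1)\big)\,n/\log_b n$ will be obtained by proving asymptotically matching lower and upper bounds, each valid asymptotically almost surely.

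For the lower bound I would use $\chi(\hat G)\ge n/\alpha(\hat G)$, where $\alpha$ denotes the independence number, together with a first-moment estimate: the expected number of independent sets of size $k$ is $\binom{n}{k}(1-p)^{\binom{k}{2}}$, which tends to $0$ as soon as $k=(2+\varepsilon)\log_b n$; hence $\alpha(\hat G)\le(2+o(1))\log_b n$ a.a.s., giving $\chi(\hat G)\ge\big(\frac12-o(1)\big)\,n/\log_b n$.

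The harder half is the upper bound, for which I would follow the \emph{peel off large independent sets} strategy. The crucial input is a strong concentration statement: every induced subgraph $H$ of $\hat G$ on $m\ge n/\log^2 n$ vertices contains an independent set of size at least $(2-o(1))\log_b m$ with probability at least $1-\exp\!\big(-m^{2-o(1)}\big)$. I would derive this by combining the second-moment method (which shows independent sets of size $\approx 2\log_b m$ exist with non-vanishing probability) with the Azuma--Hoeffding inequality applied to the vertex-exposure martingale of $\alpha(H)$, whose one-step differences are at most $1$; this boosts ``non-vanishing probability'' to ``probability exponentially close to $1$'', strong enough to survive a union bound over the at most $2^n$ subgraphs that arise during the peeling. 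One then repeatedly removes such a set as a colour class, stopping once fewer than $n/\log^2 n$ vertices remain; since each step removes $(2-o(1))\log_b m = (2-o(1))\log_b n$ vertices (as $\log m$ varies only by $O(\log\log n)$ in this range), at most $\big(1+o(1)\big)\,\frac{n}{2\log_b n}$ colours are used, and colouring the fewer than $n/\log^2 n$ leftover vertices one colour each adds only $o(n/\log n)$ more. Combining with the lower bound gives the stated asymptotics.

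The main obstacle is precisely this upper-bound concentration: one needs the independence numbers of \emph{all} the polynomially-many residual subgraphs to be simultaneously within an additive $o(\log_b m)$ of $2\log_b m$, which forces the martingale tail bound to beat a union bound over exponentially many subgraphs --- the delicate heart of Bollob\'as's original proof. A secondary point specific to our setting is that the edges of $\hat G$ are not strictly independent (a request's route correlates its intersections with several other requests), so one should either cite the $G(n,p)$ theorem as a black box or verify that the first-moment bound only uses the marginal edge probability $p$ and that the vertex-exposure martingale argument is unaffected, since revealing one request at a time still changes $\alpha$ by at most $1$.
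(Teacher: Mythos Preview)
The paper does not prove this lemma at all: it is stated with a citation to the literature (Bollob\'as's 1988 theorem on the chromatic number of $G(n,p)$) and used as a black box. Your proposal, by contrast, is a faithful and essentially correct outline of Bollob\'as's original argument --- first-moment lower bound via $\chi\ge n/\alpha$, and the greedy upper bound driven by martingale concentration of the independence number strong enough to survive a union bound over all residual subgraphs. So you are not deviating from the paper's proof; you are supplying one where the paper simply invokes the result.

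Your closing caveat is the most interesting part and deserves emphasis: in the conflict-graph setting the edges are \emph{not} independent Bernoulli$(p)$ variables, since two edges incident to the same vertex $\hat v_i$ both depend on the random route chosen for $R_i$. The paper silently assumes $\hat G$ is distributed as $G(n,p)$, and the cited lemma is literally about $G(n,p)$. Your observation that the first-moment lower bound only needs the marginal edge probability is correct, and that vertex-exposure still yields a $1$-Lipschitz martingale is also correct; however, the second-moment step for the upper bound does use pairwise edge-probabilities of independent sets, which in the true conflict-graph model are not simply $(1-p)^{\binom{k}{2}}$. So if one wanted a rigorous statement for the actual conflict graph rather than its $G(n,p)$ idealisation, that is where additional work (or an explicit independence assumption) would be required --- a gap that is present in the paper's use of the lemma, not in your sketch of the classical theorem.
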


From Lemma \ref{lem:kp}, we can see that there is a strongly positive correlation between the $\chi(\hat{G})$ and the intersection probability $p$. More specifically, the smaller the intersection probability $p$, the smaller the $\chi(\hat{G})$, which conforms to our intuition. Meanwhile, by Theorem \ref{the:chropt}, the smaller the chromatic number $\chi(\hat{G})$, the smaller the $|opt(\hat{G})|$. Therefore, we obtained the first important theoretical chain in Fig. \ref{fig:chian the}. 


\begin{figure}[!htb]
\centering
 \includegraphics[height=0.075\textwidth]{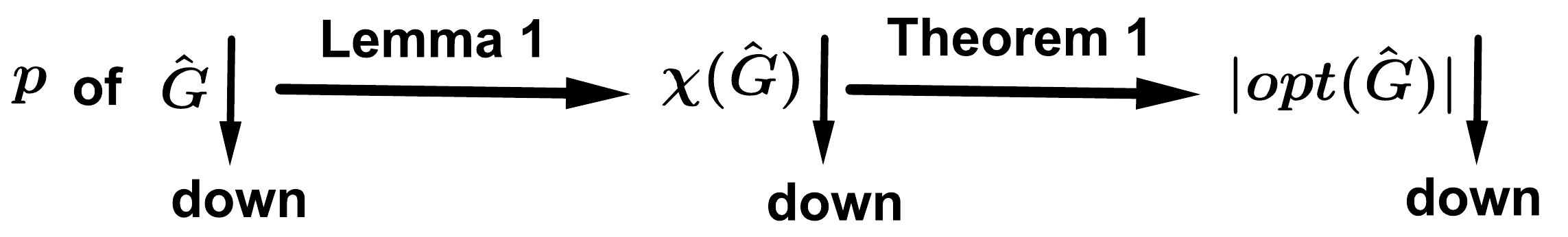}
\caption{Theoretical Chain 1}
\label{fig:chian the}
\end{figure}

Now, the question is that how to determine the intersecting probability $p$. Obviously, the $p$ depends on the three factors: \circled{1} the network topology $G(V,E)$, \circled{2} the traffic distribution $\mathcal{D}$ and \circled{3} the routing scheme $(p_1,p_2,...,p_K)$. We use a simple example below to illustrate that.

\begin{figure}[!htb]
\centering
 \begin{subfigure}[Network Configuration.]{
        \label{subfig:ncqs}
		\includegraphics[height=0.2\textwidth]{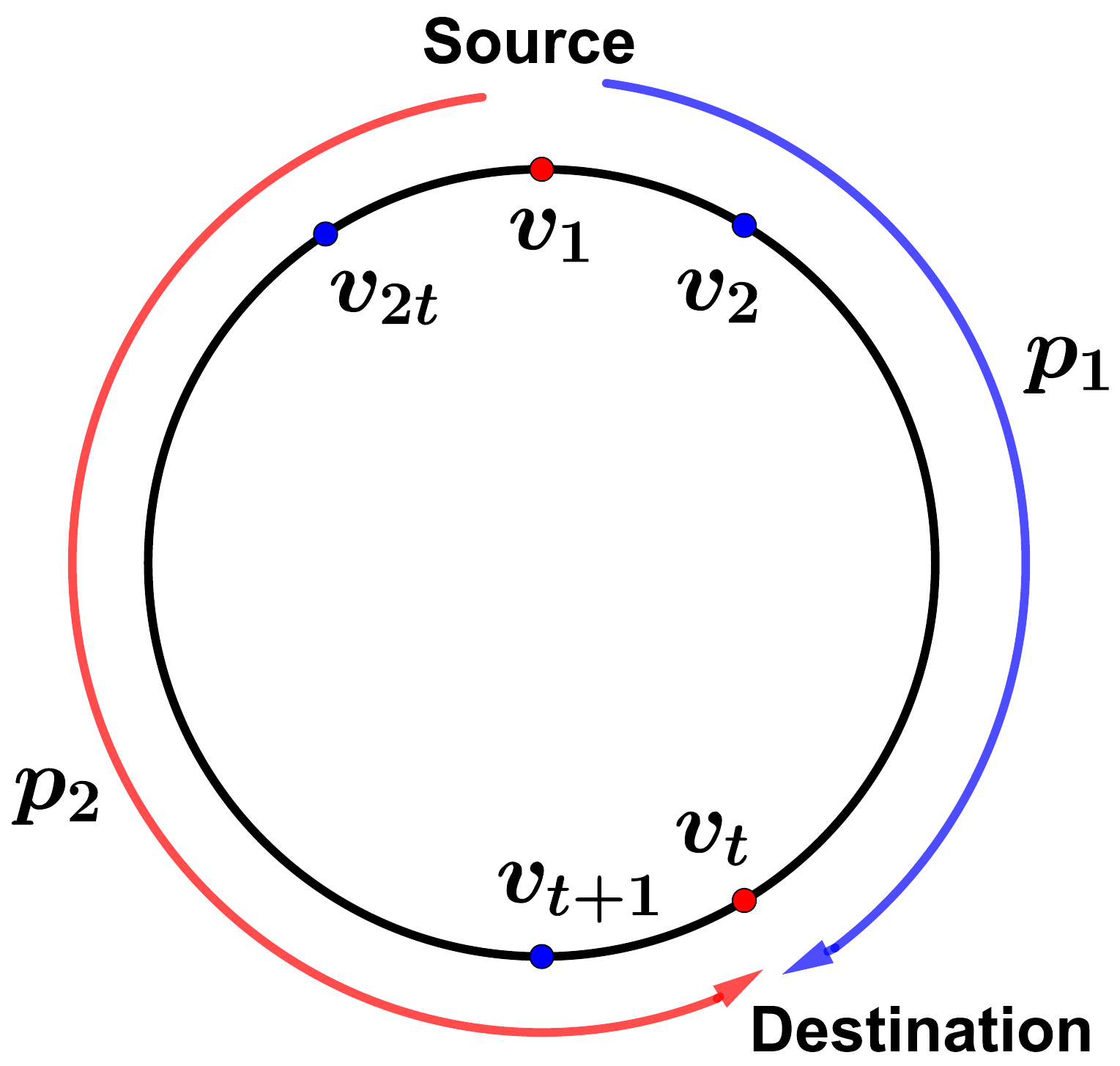}}
    \end{subfigure}
      \begin{subfigure}[Intersecting Probabilities.]{
        \label{subfig:tcgreip}
		\includegraphics[height=0.25\textwidth]{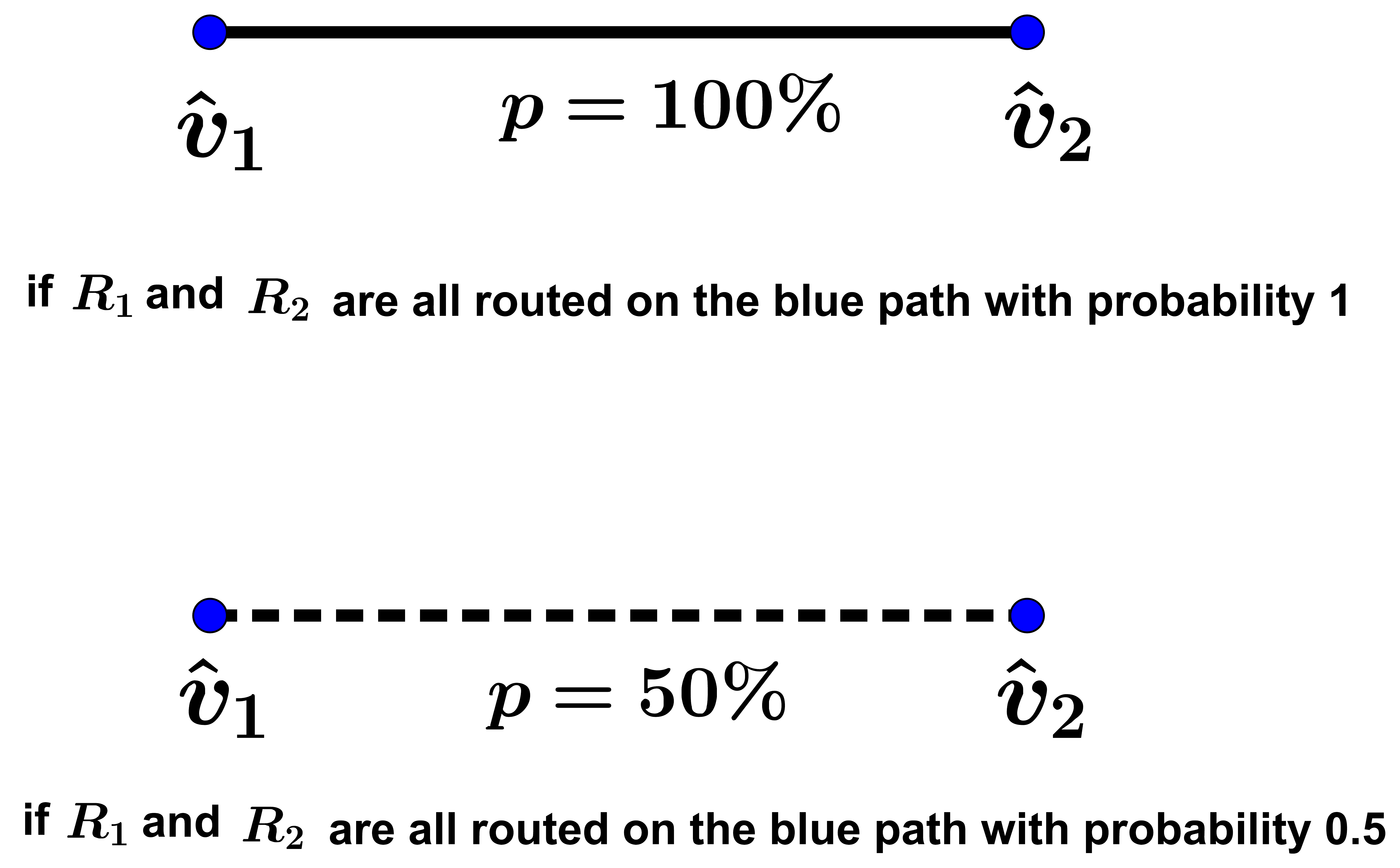}}
    \end{subfigure}
\caption{An example of intersecting probability.}
\label{fig:eaxmp}
\end{figure}

\begin{exmp}
\label{exam1}
The network topology of the EON is a bidirectional cycle $G(V,E)=[v_1v_2...v_{2t}v_1]$, whose vertice are labeled clockwise from $v_1$ to $v_{2t}$ as shown in Fig. \ref{fig:eaxmp}. We assume the following traffic distribution $\mathcal{D}$: The occurrence probability of the source-destination pair $(v_1,v_t)$ is $w_{v_1v_t}=100\%$ while it is zero for the others. We can see there are two candidate paths marked on blue and red respectively.

If the routing scheme is $(1,0)$, \textit{i.e.}, it routes all the requests on the first shortest path (the blue one), then the intersecting probability $p$ is $100\%$, which yields a complete conflict graph $\hat{G}$ with the biggest chromatic number, thereby resulting in the maximum $|opt(\hat{G})|$. Obviously, it is a bad routing scheme. An intuitively optimal routing scheme should be $(0.5,0.5)$, \textit{i.e.}, half of the requests are routed on the red one and the remaining half on the blue one, and the intersecting probability becomes $p=50\%$.

\end{exmp}

From the above example, it is not difficult to observe that the routing scheme should take into account the network topology and traffic distribution. Besides, we can also find that there is a lower bound on the intersecting probability $p$. In other words, no matter what routing scheme is used, there is a minimum intersecting probability $p_{min}$ that we can not decrease. For instance $p_{min}=50\%$ for the above example. The limitation on $p_{min}$ obviously roots in the network topology $G$ and traffic distribution $\mathcal{D}$.

\subsection{Conflict Coefficients}
\label{sub:the conflict coefficient}
Here, we propose a new concept named \textbf{conflict coefficient} of an EON, which plays an important role for our next theoretical analysis.

\begin{definition}
The \textit{conflict coefficient} $\theta_{ij}$ is \textbf{the intersecting probability} of any two requests (say $R_1$ and $R_2$) generated from $\mathcal{D}$, \textbf{under the condition that} $R_1$ and $R_2$ are respectively routed on the $i$-th and $j$-th shortest paths of their own source-destination pairs in the EON $G$.
\end{definition}

Thus, the $\theta_{ij}$ is only related to network topology $G(V,E)$ and traffic distribution $\mathcal{D}$. The $\theta_{ij}$ can be computed as follows. We first generate the $i$-th and $j$-th shortest paths for all $|V|\times (|V|-1)$ source-destination pairs in $G(V,E)$. We construct a matrix $M^{\theta_{ij}}$ in Table \ref{tab:proslij}.

\begin{table} [!htbp]
\centering
\setlength{\extrarowheight}{1mm}
\caption{Matrix $M^{\theta_{ij}}$ of the conflict coefficient $\theta_{ij}$.}
\label{tab:proslij}
\noindent\begin{tabular}{c@{}l}
\begin{tabular}{cc|ccccc}
& & \multicolumn{5}{c}{On the $j$-th path}  \\
 & & .  & . & $(s_2,d_2)$ $w_{s_2d_2}$ & . & .\\
\hline
 & . & . & . & . & . & .\\
On &. & . & . & . & . & .\\
the& $(s_1,d_1)$ & . & . & $w_{s_1d_1} \times w_{s_2d_2}$ & . & .\\ 
$i$-th &$w_{s_1d_1}$ & . & . & . & . & .\\
path & . & . &  . & . & . & . \\
 &. & . &  . & . & . & . \\
 &   \multicolumn{1}{c}{} & \multicolumn{5}{@{}l@{}}{%
      \raisebox{.5\normalbaselineskip}{%
      \rlap{$\underbrace{\hphantom{\mbox{\hspace*{\dimexpr23\arraycolsep}}}}_{|V| \times (|V|-1)}$}}
    }
\end{tabular}
 &
$\begin{array}{l}
    \MyLBrace{9ex}{\tiny{$|V| \times (|V|-1)$}}
 \end{array}$
\end{tabular}
\end{table}
The top row in Table \ref{tab:proslij} represents all the $|V|\times (|V|-1)$ source-destination pairs, which are routed on their own $j$-th shortest paths, while the leftmost column represents the $|V|\times (|V|-1)$ source-destination pairs, which are routed on their own $i$-th shortest paths. The weight $w_{sd}$ for each source-destination pair $(s,d)$ is its occurrence probability determined by the traffic distribution $\mathcal{D}$. If the $i$-th shortest path of $(s_1,d_1)$ intersects with the $j$-th shortest path of $(s_2,d_2)$ in $G(V,E)$, then this entry is $w_{s_1d_1} \times w_{s_2d_2}$, otherwise 0. Then, we have Theorem \ref{the:conflictM1M2}.

\begin{theorem}
\label{the:conflictM1M2}
The conflict coefficient $\theta_{ij}=\sum M^{\theta_{ij}}$, where $\sum M^{\theta_{ij}}$ represents the sum of all entries in the matrix.
\end{theorem}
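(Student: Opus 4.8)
The plan is to show that the claimed sum over the matrix $M^{\theta_{ij}}$ is exactly the conditional probability defining $\theta_{ij}$, by unfolding the definition of $\theta_{ij}$ via the law of total probability over the (independent) source–destination pairs of the two requests $R_1$ and $R_2$. First I would recall that each request $R$ has its $(s,d)$ drawn independently from $\mathcal{D}$, so the pair $(s_1,d_1)$ of $R_1$ and $(s_2,d_2)$ of $R_2$ is drawn from the product distribution, i.e.\ $\Pr[(s_1,d_1)=(s,d),\ (s_2,d_2)=(s',d')] = w_{sd}\cdot w_{s'd'}$. Conditioning on $R_1$ being routed on its $i$-th shortest path and $R_2$ on its $j$-th shortest path makes both lightpaths deterministic functions of the respective source–destination pairs, so the event $P_1\cap P_2\neq\emptyset$ becomes a deterministic $0/1$ indicator $\mathbb{1}[\text{$i$-th path of }(s,d)\text{ intersects $j$-th path of }(s',d')]$.

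Next I would write the conditional intersecting probability as a double sum:
\begin{equation}
\theta_{ij} = \sum_{(s,d)}\sum_{(s',d')} w_{sd}\,w_{s'd'}\cdot \mathbb{1}\big[\text{$i$-th path of }(s,d)\ \cap\ \text{$j$-th path of }(s',d')\neq\emptyset\big],
\end{equation}
where the sums range over all $|V|\times(|V|-1)$ ordered source–destination pairs. The key observation is that the generic entry of the matrix $M^{\theta_{ij}}$ in Table~\ref{tab:proslij}, indexed by row $(s,d)$ (routed on its $i$-th path) and column $(s',d')$ (routed on its $j$-th path), equals $w_{sd}\,w_{s'd'}$ precisely when those two paths intersect and $0$ otherwise — which is exactly the summand above. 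Hence $\sum M^{\theta_{ij}}$, the sum of all entries, coincides term-by-term with the double sum, giving $\theta_{ij}=\sum M^{\theta_{ij}}$.

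I would close by noting the consistency checks that make the argument airtight: the total mass of the matrix without the intersection restriction is $\big(\sum_{(s,d)} w_{sd}\big)^2 = 1$, so $\theta_{ij}\in[0,1]$ is a genuine probability, and the matrix (hence $\theta_{ij}$) is symmetric under swapping $i\leftrightarrow j$ together with a transpose, reflecting that the two requests play interchangeable roles. One should also remark on the degenerate cases: if $(s,d)$ has fewer than $i$ distinct candidate paths the corresponding row contributes $0$ (its weight can be taken as $0$ or the $i$-th path as undefined/non-intersecting), which is harmless.

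The main obstacle is not technical difficulty but making the probabilistic model fully precise: one must justify that the two requests' source–destination pairs are drawn independently (this is implicit in treating each $R_i$ as an i.i.d.\ draw from $\mathcal{D}$), and that "conditioning on routing on the $i$-th and $j$-th paths" is independent of the draw of the pairs, so that the conditional probability factorizes as the product $w_{sd}\,w_{s'd'}$ inside each cell rather than something more complicated. Once that modeling assumption is stated, the identity $\theta_{ij}=\sum M^{\theta_{ij}}$ is just the law of total probability written out as a matrix sum.
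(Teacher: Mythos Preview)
Your proposal is correct and follows essentially the same route as the paper: both arguments unfold $\theta_{ij}$ via the law of total probability over the independently drawn source--destination pairs of $R_1$ and $R_2$, observe that each cell of $M^{\theta_{ij}}$ equals $w_{s_1d_1}\cdot w_{s_2d_2}$ exactly when the corresponding $i$-th and $j$-th shortest paths intersect (and $0$ otherwise), and conclude that the double sum coincides with $\sum M^{\theta_{ij}}$. Your version is slightly more explicit about the independence assumption and the degenerate cases, but the argument is the same.
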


\begin{proof}
According to the definition, $\theta_{ij}$ is the intersecting probability of any two requests, say $R_1$ routed on the $i$-th shortest path and $R_2$ on the $j$-th shortest path. Since both $R_1$ and $R_2$ are generated from the traffic distribution $\mathcal{D}$ in $G(V,E)$, the probability that the source-destination pairs of $R_1$ and $R_2$ being $(s_1,d_1)$ and $(s_2,d_2)$ respectively is $w_{s_1d_1} \times w_{s_2d_2}$. Therefore, $\theta_{ij}=\sum_{\big((s_1,d_1)^i \cap (s_2,d_2)^j\neq \emptyset\big)} w_{s_1d_1} \times w_{s_2d_2} $, where $(s_1,d_1)^i \cap (s_2,d_2)^j\neq \emptyset$ means that the $i$-th shortest path of $(s_1,d_1)$ intersects with the $j$-th shortest path of $(s_2,d_2)$ in the $G(V,E)$. Finally, it is easy to see that $\sum_{\big((s_1,d_1)^i \cap (s_2,d_2)^j\neq \emptyset\big)} w_{s_1d_1} \times w_{s_2d_2}=\sum M^{\theta_{ij}}$ and the proof follows.
\end{proof}

Following the same idea, we can compute all the conflict coefficients of an EON $G$ under a traffic distribution $\mathcal{D}$, which compose \textbf{a real symmetric Conflict Matrix (CM)} as below. 


\[
CM=\begin{bmatrix}
    \theta_{11}       & \theta_{12} & \theta_{13} & \dots & \theta_{1K} \\
    \theta_{21}       & \theta_{22} & \theta_{23} & \dots & \theta_{2K} \\
    \theta_{K1}       & \theta_{K2} & \theta_{K3} & \dots & \theta_{KK}  \\ 
\end{bmatrix}
\]


\subsection{Global Optimal Formulation (GOF)}

Under the routing scheme $(p_1,p_2,...,p_K)$, for any two requests, say $R_1$ and $R_2$, the probability that the $R_1$ are routed on the $i$-th shortest candidate path and $R_2$ on the $j$-th shortest candidate path is $p_ip_j$. Combing with the conflict coefficient $\theta_{ij}$, the conditional intersecting probability that any two requests, routed on $i$-th and $j$-th shortest candidate paths respectively, intersect is $\theta_{ij}p_ip_j$. Thus, the intersecting probability $p$ is equal to the sum of all the conditional intersecting intersecting probabilities. We obtain a Global Optimal Formulation (GOF) as follows, which is a quadratic programming and determines the intersecting probability $p$:
\begin{equation}
\label{eqn: objet}
\begin{aligned}
\quad p= \sum_{1 \leq i,j \leq K}\theta_{ij}p_ip_j \textrm{~~~~(\textbf{GOF})},\\
\end{aligned}
\end{equation}
\begin{align}
\label{eqn:prs}
\textbf{s.t.} \quad & \sum^K_{i=1} p_i=1, \\
\label{eqn: positive}
& p_i \geq 0, \quad 1 \leq i \leq K.
\end{align}
Here, $K$ is the number of pre-computed candidate paths for each source-destination pair, and $\theta_{ij}, \forall i,j$ are the conflict coefficients. In GOF, $K$ is predetermined, and $\theta_{ij}$ are the parameters determined by the  the network topology $G$ and traffic distribution $\mathcal{D}$, while $p_i$ is determined by the routing scheme. Thus, the complexity of the quadratic programming is with one constraint, $K$ variables and $K^2$ parameters. By now, we summarize what we have obtained in Fig. \ref{fig:chian the2} which is another important theoretic chain.

\begin{figure}[!htb]
\centering
 \includegraphics[height=0.145\textwidth]{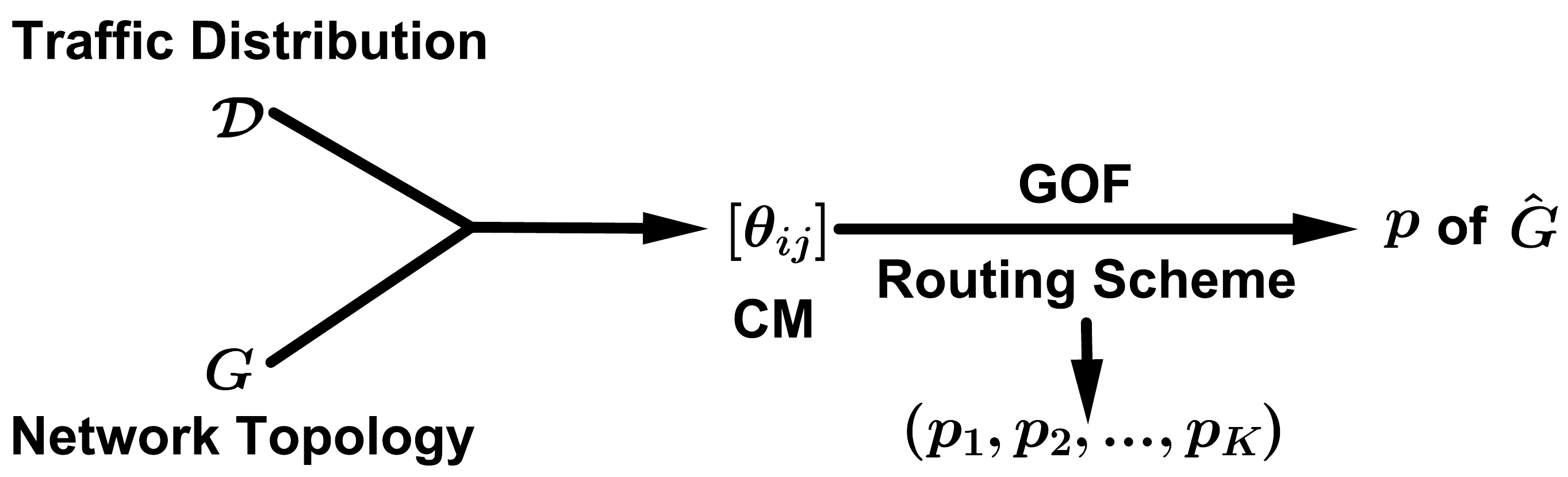}
\caption{Theoretical Chain 2}
\label{fig:chian the2}
\end{figure}

\textbf{Combing Theoretic Chain 1 in Fig. \ref{fig:chian the} with Theoretic Chain 2 in Fig. \ref{fig:chian the2}, we finally figured out how the network topology $G$, traffic distribution $\mathcal{D}$ and routing scheme $(p_1,p_2,...,p_K)$ impact on the optimal spectrum usage.} 

More specifically, the CM, composed of the conflict coefficients, embodies the impact of the network topology $G$ and traffic distribution $\mathcal{D}$. The CM can be viewed as the capacity of an EON $G$ under a specific traffic distribution $\mathcal{D}$. Besides, we can also see that the routing scheme, which determines the array $(p_1,p_2,...,p_K)$, is another important factor to determine the intersecting probability $p$. A good routing scheme should get a proper array $(p_1,p_2,...,p_K)$ resulting in a small intersecting probability $p$. In fact, after obtaining the CM, we can minimize the GOF to obtain the optimal $(p_1,p_2,...,p_K)$  which reaches the minimum intersecting probability $p_{min}$. We also use Example \ref{exam1} to illustrate this.

In the Example \ref{exam1} of Fig. \ref{fig:eaxmp}, we set $K=2$. According to  the topology of cycle-EON and traffic distribution ($w_{v_1v_t}=100\%$), the conflict coefficients are $\theta_{11}=1$, $\theta_{12}=\theta_{21}=0$ and $\theta_{22}=1$. Therefore, we can compute the intersecting probability $p$ of any two requests in this example as follows.
\begin{equation}
\label{eqn: tcpsssssss}
\begin{aligned}
& \quad \quad ~ p=p^2_1+p^2_2,\\
&s.t. \quad p_1+p_2=1.
\end{aligned}
\end{equation}
Obviously, the intersecting probability $p$ in Eq. (\ref{eqn: tcpsssssss}) reaches the minimum when $p_1=p_2=0.5$. It means the optimal routing scheme should be $(0.5,0.5)$, which conforms to the intuition.






\subsection{Further Discussions}


\subsubsection{For the G and $\mathcal{D}$}
There is always a lower bound for the intersecting probability $p$ in the GOF, \textit{i.e.}, the minimum intersecting probability $p_{min}$, that we can not decrease, no matter how to optimize the routing scheme. This lower bound roots in the network topology and traffic distribution by these conflict coefficients. In consequence, this limitation on $p_{min}$ restricts the final performance of the RSA. How to optimize the network topology and traffic distribution to adjust these conflict coefficients to lower the $p_{min}$ is an important direction in future works. 

\subsubsection{For the Routing Scheme}
In this paper, we simplify the representation of the routing scheme, which is represented by the probability array of $(p_1,p_2,...,p_K)$. But, in reality, the array of $(p_1,p_2,...,p_K)$ of many routing scheme is hard to estimate in advance. After we obtain the optimal array $(p_1,p_2,...,p_K)$ by minimizing the GOF, how to utilize this optimal percentage array on each candidate path to plan a better routing scheme is another topic that can go in-depth in future works. In this paper, we deal with this matter as following: After a requests $R(s,d)$ is generated, we take the probability $p_i$ to select the $i$-th candidate path of $(s,d)$ as the routed lightpath of $R$.




\begin{figure*}[!htb]
\centering
 \begin{subfigure}[Ring (12 nodes and 12 bidirectional fiber links)]{
        \label{subfig:reon}
		\includegraphics[height=0.27\textwidth]{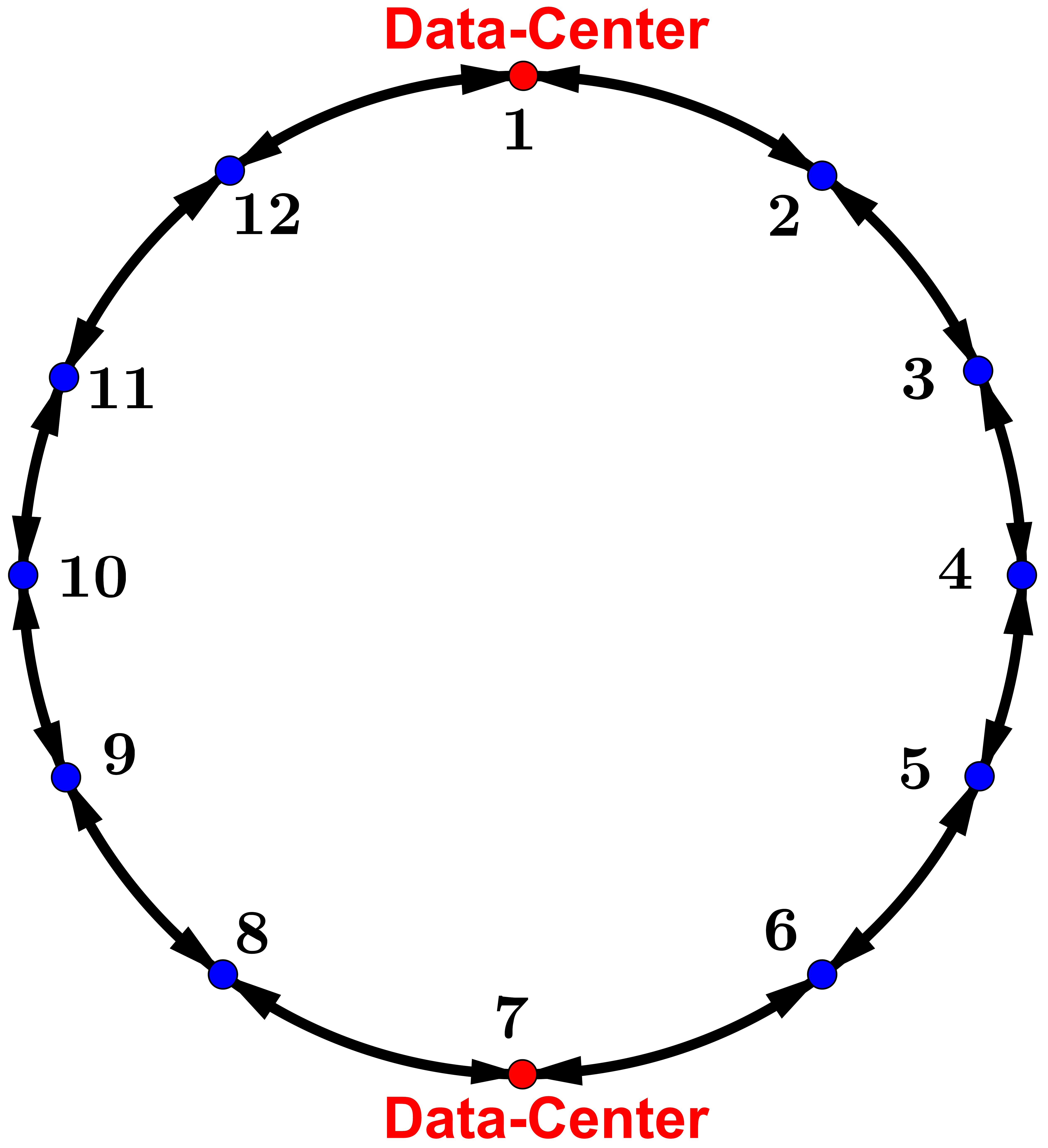}}
 \end{subfigure}
 \begin{subfigure}[NSFNET (14 nodes and 22 bidirectional fiber links)]{
        \label{subfig:nseon}
		\includegraphics[height=0.2\textwidth]{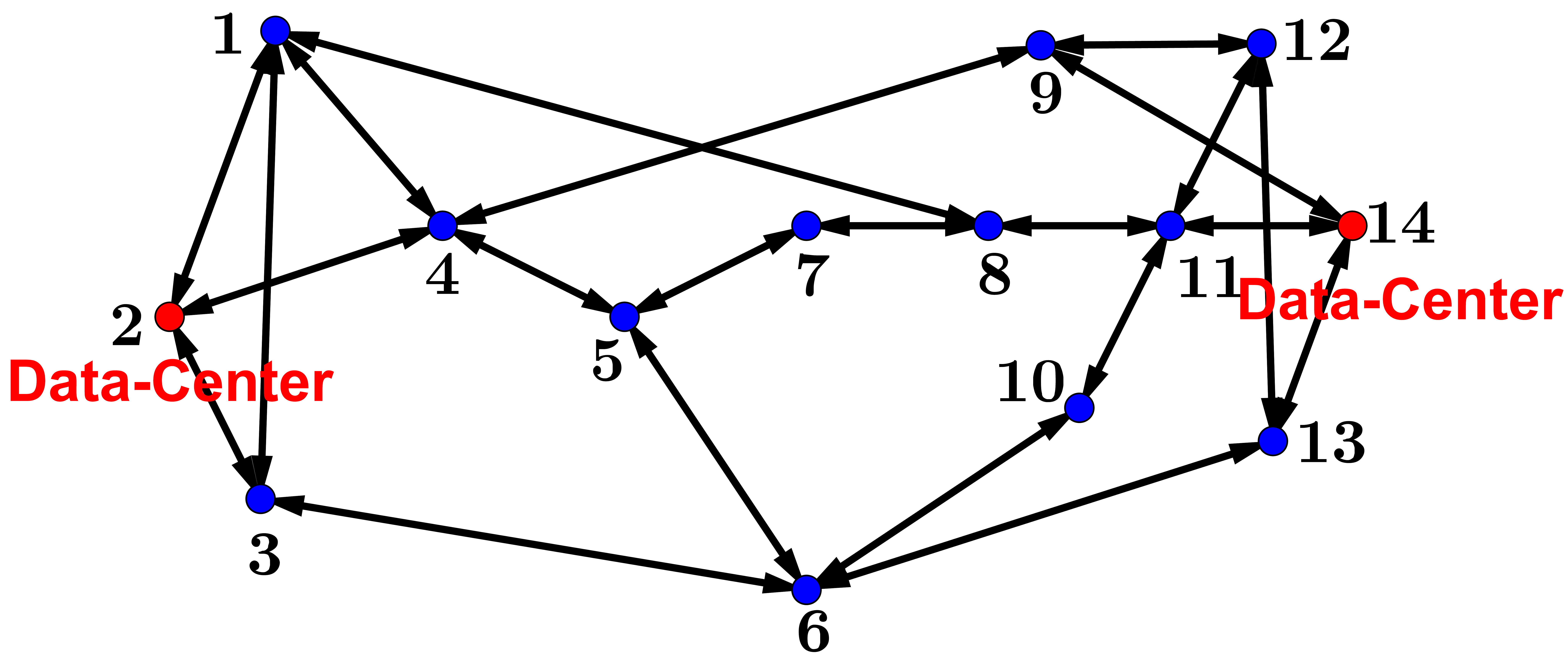}}
 \end{subfigure}
 \begin{subfigure}[NJ-LATA (11 nodes and 23 bidirectional fiber links)]{
        \label{subfig:njeon}
		\includegraphics[height=0.25\textwidth]{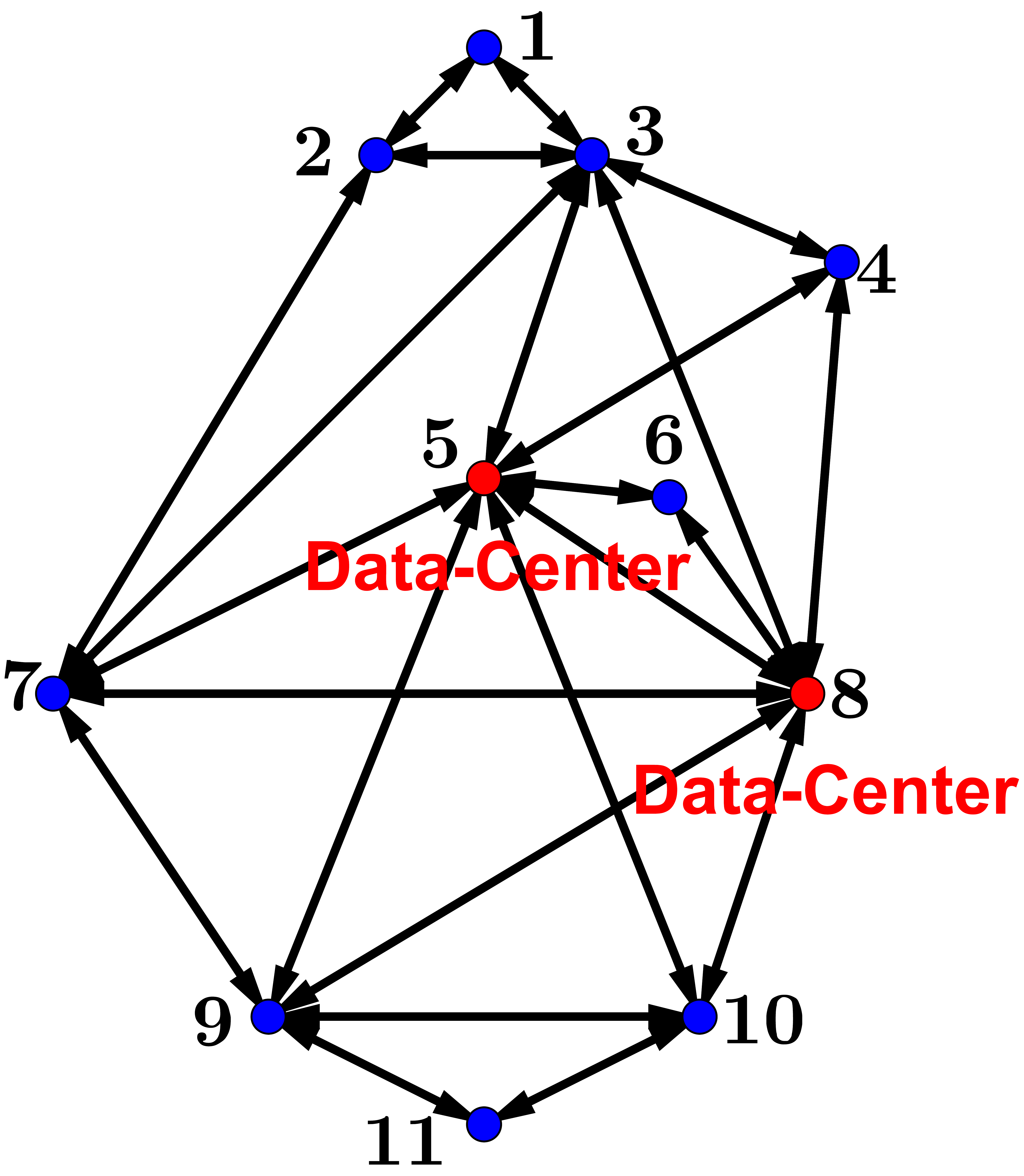}}
 \end{subfigure}
\caption{Three realistic EON topologies.}
\label{threon}
\end{figure*}

\section{CM Estimation and Optimal Routing Scheme in Realistic EONs}
\label{sec:ciceon}
In this section, we estimate the conflict coefficients and solve the corresponding GOFs in three realistic EONs under two traffic distributions respectively. The three EONs are, as shown in Fig. \ref{threon}, the Ring, NSFNET and NJ-LATA \cite{ss4:b5} respectively. The three EONs are of almost the same size in terms of the number of nodes. Thus, we shall also compare the three EONs from the perspective of intersecting probability. Since in the Ring, there are only two candidate paths for each source-destination pair, to make a fair comparison, we set $K=2$ for all EONs by pre-computing the first and second shortest paths for each source-destination pair. Therefore, the intersecting probability $p$ in GOF can be written as

\[
p =
  \begin{bmatrix}p_1 & p_2\end{bmatrix}
  \begin{bmatrix} \theta_{11} & \theta_{12} \\ \theta_{21} & \theta_{22} \end{bmatrix}
  \begin{bmatrix} p_1 \\ p_2 \end{bmatrix}
\]
where, $p_1+p_2=1$, or 

\begin{equation}
\label{equ:corep}
p=\theta_{11}\times p^2_1+2\theta_{12}\times p_1p_2+\theta_{22}\times p^2_2.
\end{equation}

In the following, we compute $p$ under two traffic distribution scenarios respectively: uniform and weighted. 

\subsection{Uniform Traffic Distribution}
\label{subsec:cd56}

In this subsection, we consider a uniform traffic distribution $\mathcal{D}$ in the three EONs, \textit{i.e.}, each source-destination pair occurs with the same occurrence probability $\frac{1}{|V|\times (|V|-1)}$. Following the computing method in Section \ref{sub:the conflict coefficient}, we can get the corresponding conflict coefficients and GOFs of the three EONs under uniform distribution.

\subsubsection{The Ring under the Uniform Traffic Distribution}
\hfill

The CM in this case is

\[
CM^{Uni}_{Ring}=\begin{bmatrix}
    0.2328       & 0.4360  \\
    0.4360       & 0.5014 \\
\end{bmatrix}
\]

and we can thus get the intersecting probability via the GOF

\begin{equation}
\label{equ:ringuni}
p=0.2328\times p^2_1+0.8720\times p_1p_2+0.5014\times p^2_2.
\end{equation}

where,  $p_1+p_2=1$.

By minimizing Eq. (\ref{equ:ringuni}), we get the minimum intersecting probability $p_{min}=23.28\%$. This can be achieved by the optimal routing scheme: $p_1=1$ and $p_2=0$, \textit{i.e.}, $(1,0)$.

\subsubsection{The NSFNET under the Uniform Distribution}
\hfill

The CM in this case is that

\[
CM^{Uni}_{NSF}=\begin{bmatrix}
    0.0979       & 0.1377  \\
    0.1377       & 0.2042 \\
\end{bmatrix}
\]

and thus the intersecting probability by the GOF is

\begin{equation}
\label{equ:nsfuni}
p=0.0979\times p^2_1+0.2754\times p_1p_2+0.2042\times p^2_2.
\end{equation}

where,  $p_1+p_2=1$.

By minimizing Eq. (\ref{equ:nsfuni}), the minimum intersecting probability $p_{min}=9.79\%$ and the optimal routing scheme is $p_1=1$ and $p_2=0$, \textit{i.e.}, $(1,0)$.

\subsubsection{The NJ-LATA under the Uniform Distribution}
\hfill

The CM in this case is that

\[
CM^{Uni}_{NJ}=\begin{bmatrix}
    0.0901       & 0.0852  \\
    0.0852       & 0.1157 \\
\end{bmatrix}
\]

and thus the intersecting probability by the GOF is

\begin{equation}
\label{equ:njuni}
p=0.0901\times p^2_1+0.1704\times p_1p_2+0.1157\times p^2_2.
\end{equation}

where,  $p_1+p_2=1$.

By minimizing Eq. (\ref{equ:njuni}), we have the minimum intersecting probability $p_{min}=8.94\%$ and the optimal routing scheme: $p_1=0.8621$, $p_2=0.1379$, \textit{i.e.}, $(0.8621, 0.1379)$.

\subsection{Weighted Traffic Distribution}
Nowadays, EONs begin to support new networking capabilities and demanding network
services such as data centers and cloud. Hence,  the Optical Cross-Connect (OXC) in EONs connected to data-centers will produce a large amount of traffic among them, for instance data migration and content provisioning. This kind of traffic contributes to the majority of the total traffic. In other words, the nodes connected to data centers tend to have a much higher occurrence probability to serve as the source or destination of a request than the other nodes \cite{ss4:b3,ss4:b4}.

In this subsection, we assume that there are two data-center nodes in the three considered EONs for simplicity. Both the two data-center nodes have the same big  occurrence probability ($45\%$ in this paper) to be involved in a request (as source or destination), while the other EON nodes equally share the remaining $10\%$ possibility. It should be noted the value of the occurrence probability given here is just for illustrative purpose, which can be in fact arbitrary. We call this distribution as the weighted traffic distribution in this paper.

\subsubsection{The Ring under the Weighted Traffic Distribution}
\hfill

We assume that the nodes $1$ and $7$ in the Ring are the two data-center nodes.  Thus, we have its CM

\[
CM^{Con}_{Ring}=\begin{bmatrix}
    0.3829       & 0.1766  \\
    0.1766       & 0.5000 \\
\end{bmatrix}
\]

and its intersecting probability

\begin{equation}
\label{equ:ringcon}
p=0.3829\times p^2_1+0.3532\times p_1p_2+0.5000\times p^2_2.
\end{equation}

where,  $p_1+p_2=1$.

By minimizing Eq. (\ref{equ:ringcon}), we obtain the minimum intersecting probability $p_{min}=30.26\%$ and the optimal routing scheme: $p_1=0.6105$ and $p_2=0.3895$, \textit{i.e.}, $(0.6105, 0.3895)$. 

\subsubsection{The NSFNET with Weighted Traffic Distribution}
\hfill

We assume that the nodes $2$ and $14$ in NSFNET are the two data-center nodes. Then, we have its CM     
\[
CM^{Con}_{NSF}=\begin{bmatrix}
    0.3554       & 0.2119  \\
    0.2119       & 0.3982 \\
\end{bmatrix}
\]

and its intersecting probability

\begin{equation}
\label{equ:nsfcon}
p=0.3554\times p^2_1+0.4238\times p_1p_2+0.3982\times p^2_2.
\end{equation}

where,  $p_1+p_2=1$.

By minimizing Eq. (\ref{equ:nsfcon}), we get the minimum intersecting probability $p_{min}=29.30\%$. The optimal routing scheme is $p_1=0.5648$ and $p_2=0.4352$, \textit{i.e.}, $(0.5648, 0.4352)$.

\subsubsection{The NJ-LATA under the Weighted Traffic Distribution}
\hfill

Nodes $5$ and $8$ are assumed to the two data-center nodes in NJ-LATA. In this case, the CM becomes

\[
CM^{Con}_{NJ}=\begin{bmatrix}
    0.2758       & 0.0616  \\
    0.0616       & 0.3306 \\
\end{bmatrix}
\]

and the intersecting probability is given by

\begin{equation}
\label{equ:njcon}
p=0.2758\times p^2_1+0.1232\times p_1p_2+0.3306\times p^2_2.
\end{equation}

where, $p_1+p_2=1$.

Thus, the minimum intersecting probability is $p_{min}=18.08\%$ and the optimal routing scheme is: $p_1=0.5568$ and $p_2=0.4432$, \textit{i.e.}, $(0.5568, 0.4432)$.

Now, we compare the minimum intersecting probability of the three EONs under the two traffic distributions in Table \ref{tab: tmipiss}.

\begin{table}[!htb]
\centering
  \caption{Comparison of the Minimum Intersecting Probability}
\label{tab: tmipiss}
\resizebox{0.47\textwidth}{!}{\begin{tabular}{|M{1.9cm}|M{1.2cm}|M{1.2cm}|M{1.3cm}|}
\hline
 \diagbox{Traffic}{EON} & Ring & NSFNET & NJ-LATA \\
\hline
Uniform Distribution &  23.28\%  & 9.79\% & 8.94\% \\
\hline
Weighted Distribution& 30.26\% & 29.30\% & 18.08\% \\
\hline
\end{tabular}}
\end{table}

The minimum intersecting probabilities of the Ring and NSFNET under weighted distribution are the two highest ones, 30.26\% and 29.30\% respectively, while that of the NJ-LATA and NSFNET under uniform distribution are the two lowest ones, 8.94\% and 9.79\% respectively. 

Thus, if taking their own optimal routing schemes and with a same spectrum assignment method, the final MUFIs of the Ring and NSFNET under the weighted distribution should be the two maximums, and that of NJ-LATA and NSFNET under uniform distribution should be the two minimums. 


\section{Numerical Results}
\label{sec:numres}

In this section, we verify the effectivenesses of the two theoretic chains by simulations:

 \begin{itemize}
 \item The effectiveness of Theoretic Chain 2 in Fig. \ref{fig:chian the2}, \textit{i.e.}, the conflict coefficients and the computing method for intersecting probability. In other words, whether the theoretical intersecting probability computed by the GOF can fit in with the realistic one.
 \item The effectiveness of Theoretic Chain 1 in Fig. \ref{fig:chian the}, \textit{i.e.}, the intersecting probability itself. In other words, whether the intersecting probability is positively correlated to the final MUFI of the RSA.
 \end{itemize}

We conduct simulations on the three EONs under both the uniform and the weighted traffic distributions described before. In our simulations, after the lightpath routing, we utilize a same spectrum assignment algorithm MRSA in \cite{s1:b2} to assign FS sets to requests. We consider six scenarios in our simulations, which are labeled in Table \ref{tab: ssos}.

\begin{table}[!htb]
\centering
  \caption{Six Simulation Scenarios}
\label{tab: ssos}
\resizebox{0.47\textwidth}{!}{\begin{tabular}{|M{1.9cm}|M{1.2cm}|M{1.2cm}|M{1.3cm}|}
\hline
 \diagbox{Traffic}{EON} & Ring & NSFNET & NJ-LATA \\
\hline
Uniform Distribution &  R-U  & NSF-U & NJ-U \\
\hline
Weighted Distribution& R-W & NSF-W & NJ-W \\
\hline
\end{tabular}}
\end{table}

In each scenario, we first pre-compute two candidate paths for each source-destination pair, \textit{i.e.}, the first and second shortest paths as mentioned in Section \ref{sec:ciceon}. Eleven routing schemes will be conducted and compared in each scenario by increasing $p_1$ (the percentage of requests routed on the first shortest path) from 0 to 1 with a step of 0.1. Hence, there are 66 cases  in total for the six scenarios. Meanwhile, for each case, we compare the realistic intersecting probability $p$ with the one theoretically estimated by the GOF through substituting the value of $p_1$ into the corresponding formulation obtained in Section \ref{sec:ciceon}. Here the realistic intersecting probability is $\dfrac{\#\{e\}}{\binom{|\mathcal{R}|}{2}}$, where $\#\{e\}$ is the  number of realistic edges in the conflict graph.

Besides, from viewpoint of intersecting probability, we shall also uniformly  analyze the six scenarios and 66 cases as follows.

 \begin{itemize}
 \item  The final RSA performance of the six scenarios under their own optimal routing schemes.
 \item The performance differences in the 66 cases. 
 \end{itemize}

For the bandwidth range and guard band size, we set $\alpha=1$, $\beta=4$, and $GB=1$ respectively. 
The number of requests is set as $|\mathcal{R}|=1000$ in each simulation. We repeat each simulation 100 times under the same circumstance to ensure sufficient statistical accuracy, and a 95\% confidence interval is given to each numerical result. All the simulations have been run by MATLAB 2015a on a computer with 3.2 GHz Intel(R) Core(TM) i5-4690S CPU and 8 GBytes RAM.

\subsection{Uniform Traffic Distribution}
\label{subsec:unid}

We first verify the three scenarios under the uniform traffic distribution: R-U, NSF-U and NJ-U. The corresponding results are demonstrated in Figs. \ref{Fig: nrru}, \ref{Fig: nrnsfu} and \ref{Fig: nrnju} respectively.

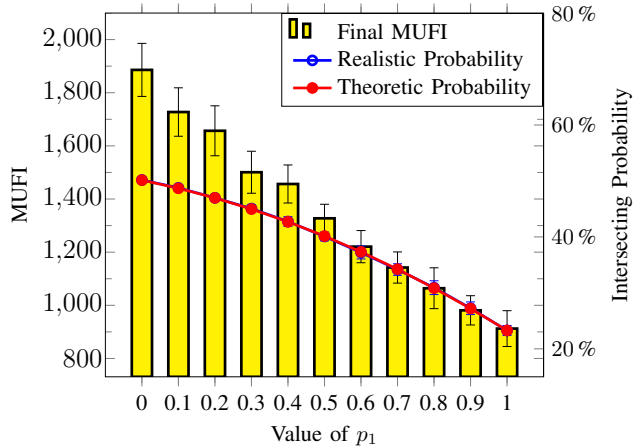
\begin{figure}[!htbp]
\centering
\begin{tikzpicture}[scale=0.85]
\begin{axis}[
        minor tick num =1,
        xlabel=Value of $p_1$,
       minor xtick={1,2,3,4,5,6,7,8,9,10,11},
       xticklabels={0,0.1,0.2,0.3,0.4,0.5,0.6,0.7,0.8,0.9,1},
         ylabel=MUFI,
         scaled y ticks = false,
          ybar,
         xtick=data,
 yticklabel style = {font=\large,xshift=0.5ex},
]

\addplot[fill=yellow,very thick,ybar=1*\pgflinewidth, bar width=9pt,ybar legend,error bars/.cd, y dir=both, y explicit] coordinates {
        (1,1885.92)  += (0,99.8) -= (0,99.8)
        (2,1727.36)  += (0,91.14) -= (0,91.14)
        (3,1656.78)  += (0,94.00) -= (0,94.00)
        (4,1500.78)  += (0,79.02) -= (0,79.02)
        (5,1456.54)  += (0,71.60) -= (0,71.60)
        (6,1327.36)  += (0,52.93) -= (0,52.93)
        (7,1220.78)  += (0,60.55) -= (0,60.55)
        (8,1142.24)  += (0,58.71) -= (0,58.71)
        (9,1064.28)  += (0,76.88) -= (0,76.88)
        (10,981.10)  += (0,55.07) -= (0,55.07)
        (11,912.42)  += (0,67.38) -= (0,67.38)
        
};\label{plot_one1}
\end{axis}
\begin{axis}[
  axis y line*=right,
  ylabel=Intersecting Probability,
  ymin=0.15,ymax=0.8,
  axis x line=none,
  scaled ticks=false,
  legend style={at={(0.7,1)},anchor= north,legend cell align=left},
        yticklabel=\pgfmathparse{100*\tick}\pgfmathprintnumber{\pgfmathresult}\,\%,
        yticklabel style={/pgf/number format/.cd,fixed,precision=2}  
]
\addlegendimage{/pgfplots/refstyle=plot_one1}\addlegendentry{Final MUFI}
 \addplot[color=blue,mark=o,very thick,error bars/.cd, y dir=both, y explicit] coordinates {
        (1,0.5017)  += (0,0.0043) -= (0,0.0043)
        (2,0.4877)  += (0,0.0049) -= (0,0.0049)
        (3,0.4698)  += (0,0.0061) -= (0,0.0061)
        (4,0.4506)  += (0,0.0069) -= (0,0.0069)
        (5,0.4275)  += (0,0.0086) -= (0,0.0086)
        (6,0.4008)  += (0,0.0078) -= (0,0.0078)
        (7,0.3721)  += (0,0.0115) -= (0,0.0115)
        (8,0.3417)  += (0,0.0103) -= (0,0.0103)
        (9,0.3093)  += (0,0.0123) -= (0,0.0123)
        (10,0.2725)  += (0,0.0113) -= (0,0.0113)
        (11,0.2328)  += (0,0.0089) -= (0,0.0089)
};
\addlegendentry{Realistic Probability};
\addplot[color=red,mark=*,very thick,error bars/.cd, y dir=both, y explicit] coordinates {
        (1,0.5014) 
        (2,0.4869)  
        (3,0.4697)  
        (4,0.4497)  
        (5,0.4270)  
        (6,0.4015)  
        (7,0.3733)  
        (8,0.3423)  
        (9,0.3085)  
        (10,0.2720)  
        (11,0.2328)  
};
\addlegendentry{Theoretic Probability};
\end{axis}
\end{tikzpicture}
\caption{Numerical results for R-U Scenario.}
\label{Fig: nrru}
\end{figure}

\begin{figure}[!htbp]
\centering
\begin{tikzpicture}[scale=0.85]
\begin{axis}[
        minor tick num =1,
        xlabel=Value of $p_1$,
       minor xtick={1,2,3,4,5,6,7,8,9,10,11},
       xticklabels={0,0.1,0.2,0.3,0.4,0.5,0.6,0.7,0.8,0.9,1},
         ylabel=MUFI,
         scaled y ticks = false,
          ybar,
         xtick=data,
 yticklabel style = {font=\large,xshift=0.5ex},
]

\addplot[fill=yellow,very thick,ybar=1*\pgflinewidth, bar width=9pt,ybar legend,error bars/.cd, y dir=both, y explicit] coordinates {
        (1,508.56)  += (0,65.70) -= (0,65.70)
        (2,479.18)  += (0,56.84) -= (0,56.84)
        (3,450.54)  += (0,49.81) -= (0,49.81)
        (4,423.22)  += (0,54.64) -= (0,54.64)
        (5,400.22)  += (0,47.43) -= (0,47.43)
        (6,370.28)  += (0,40.41) -= (0,40.41)
        (7,352.32)  += (0,42.15) -= (0,42.15)
        (8,342.10)  += (0,42.41) -= (0,42.41)
        (9,332.34)  += (0,45.90) -= (0,45.90)
        (10,332.10)  += (0,47.93) -= (0,47.93)
        (11,330)  += (0,45.01) -= (0,45.01)
        
};\label{plot_one2}
\end{axis}
\begin{axis}[
  axis y line*=right,
  ylabel=Intersecting Probability,
  ymin=0,ymax=0.5,
  axis x line=none,
  scaled ticks=false,
  legend style={at={(0.7,1)},anchor= north,legend cell align=left},
        yticklabel=\pgfmathparse{100*\tick}\pgfmathprintnumber{\pgfmathresult}\,\%,
        yticklabel style={/pgf/number format/.cd,fixed,precision=2}  
]
\addlegendimage{/pgfplots/refstyle=plot_one2}\addlegendentry{Final MUFI}
 \addplot[color=blue,mark=o,very thick,error bars/.cd, y dir=both, y explicit] coordinates {
        (1,0.2035)  += (0,0.0075) -= (0,0.0075)
        (2,0.1905)  += (0,0.0070) -= (0,0.0070)
        (3,0.1779)  += (0,0.0074) -= (0,0.0074)
        (4,0.1658)  += (0,0.0073) -= (0,0.0073)
        (5,0.1551)  += (0,0.0058) -= (0,0.0058)
        (6,0.1436)  += (0,0.0055) -= (0,0.0055)
        (7,0.1333)  += (0,0.0065) -= (0,0.0065)
        (8,0.1242)  += (0,0.0060) -= (0,0.0060)
        (9,0.1150)  += (0,0.0050) -= (0,0.0050)
        (10,0.1062)  += (0,0.0048) -= (0,0.0048)
        (11,0.0977)  += (0,0.0046) -= (0,0.0046)
};
\addlegendentry{Realistic Probability};
\addplot[color=red,mark=*,very thick,error bars/.cd, y dir=both, y explicit] coordinates {
        (1,0.2042) 
        (2,0.1912)  
        (3,0.1787)  
        (4,0.1667)  
        (5,0.1553)  
        (6,0.1444)  
        (7,0.1340)  
        (8,0.1242)  
        (9,0.1149)  
        (10,0.1062)  
        (11,0.0979)  
};
\addlegendentry{Theoretic Probability};
\end{axis}
\end{tikzpicture}
\caption{Numerical results for NSF-U Scenario.}
\label{Fig: nrnsfu}
\end{figure}

\begin{figure}[!htbp]
\centering
\begin{tikzpicture}[scale=0.85]
\begin{axis}[
        minor tick num =1,
        xlabel=Value of $p_1$,
       minor xtick={1,2,3,4,5,6,7,8,9,10,11},
       ymin=0,ymax=600,
       xticklabels={0,0.1,0.2,0.3,0.4,0.5,0.6,0.7,0.8,0.9,1},
         ylabel=MUFI,
         scaled y ticks = false,
          ybar,
         xtick=data,
 yticklabel style = {font=\large,xshift=0.5ex},
]

\addplot[fill=yellow,very thick,ybar=1*\pgflinewidth, bar width=9pt,ybar legend,error bars/.cd, y dir=both, y explicit] coordinates {
        (1,362.92)  += (0,54.10) -= (0,54.10)
        (2,359.26)  += (0,58.44) -= (0,58.44)
        (3,357.38)  += (0,63.17) -= (0,63.17)
        (4,360.20)  += (0,58.83) -= (0,58.83)
        (5,358.74)  += (0,51.86) -= (0,51.86)
        (6,363.74)  += (0,51.31) -= (0,51.31)
        (7,361.34)  += (0,58.10) -= (0,58.10)
        (8,374.58)  += (0,50.54) -= (0,50.54)
        (9,365.56)  += (0,51.57) -= (0,51.57)
        (10,372.44)  += (0,52.71) -= (0,52.71)
        (11,379.06)  += (0,52.84) -= (0,52.84)
        
};\label{plot_one3}
\end{axis}
\begin{axis}[
  axis y line*=right,
  ylabel=Intersecting Probability,
  ymin=0,ymax=0.5,
  axis x line=none,
  scaled ticks=false,
  legend style={at={(0.7,1)},anchor= north,legend cell align=left},
        yticklabel=\pgfmathparse{100*\tick}\pgfmathprintnumber{\pgfmathresult}\,\%,
        yticklabel style={/pgf/number format/.cd,fixed,precision=2}  
]
\addlegendimage{/pgfplots/refstyle=plot_one3}\addlegendentry{Final MUFI}
 \addplot[color=blue,mark=o,very thick,error bars/.cd, y dir=both, y explicit] coordinates {
        (1,0.1155)  += (0,0.0052) -= (0,0.0052)
        (2,0.1101)  += (0,0.0050) -= (0,0.0050)
        (3,0.1052)  += (0,0.0052) -= (0,0.0052)
        (4,0.1006)  += (0,0.0052) -= (0,0.0052)
        (5,0.0971)  += (0,0.0052) -= (0,0.0052)
        (6,0.0944)  += (0,0.0061) -= (0,0.0061)
        (7,0.0920)  += (0,0.0055) -= (0,0.0055)
        (8,0.0909)  += (0,0.0064) -= (0,0.0064)
        (9,0.0905)  += (0,0.0072) -= (0,0.0072)
        (10,0.0904)  += (0,0.0066) -= (0,0.0066)
        (11,0.0909)  += (0,0.0070) -= (0,0.0070)
};
\addlegendentry{Realistic Probability};
\addplot[color=red,mark=*,very thick,error bars/.cd, y dir=both, y explicit] coordinates {
        (1,0.1157) 
        (2,0.1100)  
        (3,0.1049)  
        (4,0.1006)  
        (5,0.0970)  
        (6,0.0940)  
        (7,0.0918)  
        (8,0.0903)  
        (9,0.0895)  
        (10,0.0895)  
        (11,0.0901)  
};
\addlegendentry{Theoretic Probability};
\end{axis}
\end{tikzpicture}
\caption{Numerical results for NJ-U Scenario.}
\label{Fig: nrnju}
\end{figure}
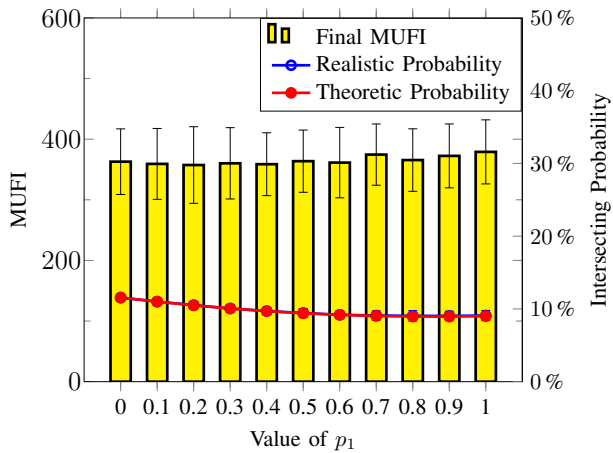

From these results, we can observe that in all the three scenarios, the realistic intersecting probabilities (marked by blue lines) perfectly fit in with the theoretic ones (marked by red lines) that we can barely see the blue lines in the three figures. 

The results of intersecting probabilities prove the effectiveness of  our Theoretic  Chain 2 in Fig. \ref{fig:chian the2}, \textit{i.e.}, the conflict coefficients and the computing method of GOF.

For R-U and NSF-U, with the intersecting probabilities gradually decreasing, the corresponding MUFIs reduce and reach the minimum when $p_1=1$ in Figs. \ref{Fig: nrru} and \ref{Fig: nrnsfu}, $\textit{i.e.}$, their own optimal routing schemes. For NJ-U in Fig \ref{Fig: nrnju}, as we can observe that the intersecting probability $p$ is within a narrow range of $[8.94\%, 11.57\%]$  which means the difference of intersecting probabilities is less than $3\%$. Thus, the difference of the corresponding conflict graphs is so subtle that only a small volatility of MUFIs ([357.38, 379.06]) is observed.

This results of MUFIs prove the effectiveness of our Theoretic Chain 1 in Fig. \ref{fig:chian the}, \textit{i.e.}, the intersecting probability is positively  correlated to the final MUFI. 


\subsection{Weighted Traffic Distribution}
\label{subsec:concD}

Here, we evaluate the other three scenarios with the weighted traffic distribution: R-W, NSF-W and NJ-W. The corresponding results are demonstrated in Figs. \ref{Fig: nrrc}, \ref{Fig: nrnsfc} and \ref{Fig: nrnjc} respectively.
  
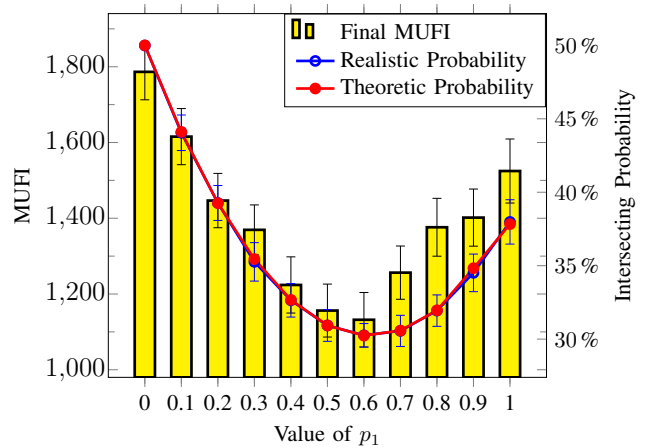
\begin{figure}[!htbp]
\centering
\begin{tikzpicture}[scale=0.85]
\begin{axis}[
        minor tick num =1,
        xlabel=Value of $p_1$,
       minor xtick={1,2,3,4,5,6,7,8,9,10,11},
       xticklabels={0,0.1,0.2,0.3,0.4,0.5,0.6,0.7,0.8,0.9,1},
         ylabel=MUFI,
         scaled y ticks = false,
          ybar,
         xtick=data,
 yticklabel style = {font=\large,xshift=0.5ex},
]

\addplot[fill=yellow,very thick,ybar=1*\pgflinewidth, bar width=9pt,ybar legend,error bars/.cd, y dir=both, y explicit] coordinates {
        (1,1786.5)  += (0,73.78) -= (0,73.78)
        (2,1615.5)  += (0,74.10) -= (0,74.10)
        (3,1446.7)  += (0,71.72) -= (0,71.72)
        (4,1369.4)  += (0,65.80) -= (0,65.80)
        (5,1223.9)  += (0,74.19) -= (0,74.19)
        (6,1156.28)  += (0,69.90) -= (0,69.90)
        (7,1132)  += (0,72.03) -= (0,72.03)
        (8,1256.4)  += (0,70.30) -= (0,70.30)
        (9,1376.2)  += (0,76.38) -= (0,76.38)
        (10,1401.7)  += (0,75.42) -= (0,75.42)
        (11,1524.7)  += (0,84.53) -= (0,84.53)
        
};\label{plot_one}
\end{axis}
\begin{axis}[
  axis y line*=right,
  ylabel=Intersecting Probability,
  axis x line=none,
  scaled ticks=false,
  legend style={at={(0.7,1)},anchor= north,legend cell align=left},
        yticklabel=\pgfmathparse{100*\tick}\pgfmathprintnumber{\pgfmathresult}\,\%,
        yticklabel style={/pgf/number format/.cd,fixed,precision=2}  
]
\addlegendimage{/pgfplots/refstyle=plot_one}\addlegendentry{Final MUFI}
 \addplot[color=blue,mark=o,very thick,error bars/.cd, y dir=both, y explicit] coordinates {
        (1,0.4999)  += (0,0.0011) -= (0,0.0011)
        (2,0.4405)  += (0,0.0121) -= (0,0.0121)
        (3,0.3927)  += (0,0.0119) -= (0,0.0119)
        (4,0.3526)  += (0,0.0131) -= (0,0.0131)
        (5,0.3264)  += (0,0.0114) -= (0,0.0114)
        (6,0.3093)  += (0,0.0108) -= (0,0.0108)
        (7,0.3026)  += (0,0.0080) -= (0,0.0080)
        (8,0.3056)  += (0,0.0106) -= (0,0.0106)
        (9,0.3194)  += (0,0.0107) -= (0,0.0107)
        (10,0.3451)  += (0,0.0128) -= (0,0.0128)
        (11,0.3798)  += (0,0.0151) -= (0,0.0151)
};
\addlegendentry{Realistic Probability};
\addplot[color=red,mark=*,very thick,error bars/.cd, y dir=both, y explicit] coordinates {
        (1,0.5) 
        (2,0.4411)  
        (3,0.3926)  
        (4,0.3545)  
        (5,0.3267)  
        (6,0.3093)  
        (7,0.3024)  
        (8,0.3058)  
        (9,0.3196)  
        (10,0.3483)  
        (11,0.3783)  
};
\addlegendentry{Theoretic Probability};
\end{axis}
\end{tikzpicture}
\caption{Numerical results for R-W Scenario.}
\label{Fig: nrrc}
\end{figure}

\begin{figure}[!htbp]
\centering
\begin{tikzpicture}[scale=0.85]
\begin{axis}[
        minor tick num =1,
        xlabel=Value of $p_1$,
       minor xtick={1,2,3,4,5,6,7,8,9,10,11},
       xticklabels={0,0.1,0.2,0.3,0.4,0.5,0.6,0.7,0.8,0.9,1},
         ylabel=MUFI,
         scaled y ticks = false,
          ybar,
         xtick=data,
 yticklabel style = {font=\large,xshift=0.5ex},
]

\addplot[fill=yellow,very thick,ybar=1*\pgflinewidth, bar width=9pt,ybar legend,error bars/.cd, y dir=both, y explicit] coordinates {
        (1,1500.5)  += (0,85.25) -= (0,85.25)
        (2,1420.5)  += (0,91.10) -= (0,91.10)
        (3,1380)  += (0,89.22) -= (0,89.22)
        (4,1350)  += (0,89.12) -= (0,89.12)
        (5,1320)  += (0,91.23) -= (0,91.23)
        (6,1319)  += (0,89.32) -= (0,89.32)
        (7,1320)  += (0,91.8) -= (0,91.8)
        (8,1353)  += (0,89.9) -= (0,89.9)
        (9,1374)  += (0,91.3) -= (0,91.3)
        (10,1401.7)  += (0,84.42) -= (0,84.42)
        (11,1452.6)  += (0,73.16) -= (0,73.16)
        
};\label{plot_one}
\end{axis}
\begin{axis}[
  axis y line*=right,
  ylabel=Intersecting Probability,
  axis x line=none,
  scaled ticks=false,
  legend style={at={(0.7,1)},anchor= north,legend cell align=left},
        yticklabel=\pgfmathparse{100*\tick}\pgfmathprintnumber{\pgfmathresult}\,\%,
        yticklabel style={/pgf/number format/.cd,fixed,precision=2}  
]
\addlegendimage{/pgfplots/refstyle=plot_one}\addlegendentry{Final MUFI}
 \addplot[color=blue,mark=o,very thick,error bars/.cd, y dir=both, y explicit] coordinates {
        (1,0.3976)  += (0,0.0189) -= (0,0.00189)
        (2,0.3631)  += (0,0.0188) -= (0,0.0188)
        (3,0.3338)  += (0,0.0158) -= (0,0.0158)
        (4,0.3141)  += (0,0.0161) -= (0,0.0161)
        (5,0.2995)  += (0,0.0147) -= (0,0.0147)
        (6,0.2923)  += (0,0.0141) -= (0,0.0141)
        (7,0.2912)  += (0,0.0140) -= (0,0.0140)
        (8,0.2971)  += (0,0.0176) -= (0,0.0176)
        (9,0.3108)  += (0,0.0165) -= (0,0.0165)
        (10,0.3301)  += (0,0.0149) -= (0,0.0149)
        (11,0.3548)  += (0,0.0166) -= (0,0.0166)
};
\addlegendentry{Realistic Probability};
\addplot[color=red,mark=*,very thick,error bars/.cd, y dir=both, y explicit] coordinates {
        (1,0.3982) 
        (2,0.3642)  
        (3,0.3368)  
        (4,0.3161)  
        (5,0.3019)  
        (6,0.2943)  
        (7,0.2934)  
        (8,0.2990)  
        (9,0.3112)  
        (10,0.3300)  
        (11,0.3554)  
};
\addlegendentry{Theoretic Probability};
\end{axis}
\end{tikzpicture}
\caption{Numerical results for NSF-W Scenario.}
\label{Fig: nrnsfc}
\end{figure}

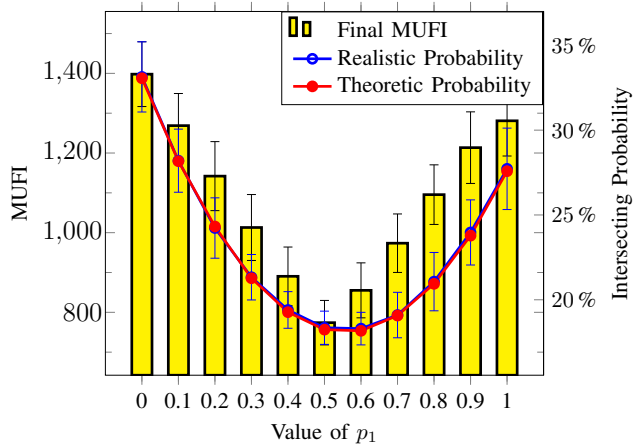
\begin{figure}[!htbp]
\centering
\begin{tikzpicture}[scale=0.85]
\begin{axis}[
        minor tick num =1,
        xlabel=Value of $p_1$,
       minor xtick={1,2,3,4,5,6,7,8,9,10,11},
       xticklabels={0,0.1,0.2,0.3,0.4,0.5,0.6,0.7,0.8,0.9,1},
         ylabel=MUFI,
         scaled y ticks = false,
          ybar,
         xtick=data,
 yticklabel style = {font=\large,xshift=0.5ex},
]

\addplot[fill=yellow,very thick,ybar=1*\pgflinewidth, bar width=9pt,ybar legend,error bars/.cd, y dir=both, y explicit] coordinates {
        (1,1398)  += (0,81.3) -= (0,81.3)
        (2,1268.8)  += (0,80.65) -= (0,80.65)
        (3,1142)  += (0,86.65) -= (0,86.65)
        (4,1013)  += (0,82.77) -= (0,82.77)
        (5,890.44)  += (0,73.50) -= (0,73.50)
        (6,774.04)  += (0,55.83) -= (0,55.83)
        (7,855.2)  += (0,69.01) -= (0,69.01)
        (8,973.58)  += (0,73.44) -= (0,73.44)
        (9,1095.5)  += (0,74.83) -= (0,74.83)
        (10,1213.5)  += (0,89.89) -= (0,89.89)
        (11,1280.9)  += (0,88.39) -= (0,88.39)
        
};\label{plot_one}
\end{axis}
\begin{axis}[
  axis y line*=right,
  ylabel=Intersecting Probability,
  axis x line=none,
  scaled ticks=false,
  legend style={at={(0.7,1)},anchor= north,legend cell align=left},
        yticklabel=\pgfmathparse{100*\tick}\pgfmathprintnumber{\pgfmathresult}\,\%,
        yticklabel style={/pgf/number format/.cd,fixed,precision=2}  
]
\addlegendimage{/pgfplots/refstyle=plot_one}\addlegendentry{Final MUFI}
 \addplot[color=blue,mark=o,very thick,error bars/.cd, y dir=both, y explicit] coordinates {
        (1,0.3314)  += (0,0.0207) -= (0,0.0207)
        (2,0.2820)  += (0,0.0186) -= (0,0.0186)
        (3,0.2423)  += (0,0.0178) -= (0,0.0178)
        (4,0.2133)  += (0,0.0134) -= (0,0.0134)
        (5,0.1940)  += (0,0.0108) -= (0,0.0108)
        (6,0.1835)  += (0,0.0098) -= (0,0.0098)
        (7,0.1830)  += (0,0.0096) -= (0,0.0096)
        (8,0.1910)  += (0,0.0134) -= (0,0.0134)
        (9,0.2107)  += (0,0.0172) -= (0,0.0172)
        (10,0.2397)  += (0,0.0192) -= (0,0.0192)
        (11,0.2772)  += (0,0.0240) -= (0,0.0240)
};
\addlegendentry{Realistic Probability};
\addplot[color=red,mark=*,very thick,error bars/.cd, y dir=both, y explicit] coordinates {
        (1,0.3306) 
        (2,0.2817)  
        (3,0.2432)  
        (4,0.2127)  
        (5,0.1927)  
        (6,0.1824)  
        (7,0.1817)  
        (8,0.1907)  
        (9,0.2094)  
        (10,0.2378)  
        (11,0.2758)  
};
\addlegendentry{Theoretic Probability};
\end{axis}
\end{tikzpicture}
\caption{Numerical results for NJ-W Scenario.}
\label{Fig: nrnjc}
\end{figure}

Similarly, the realistic intersecting probabilities match very well with the theoretical ones in the three scenarios, which again prove the effectiveness of Theoretic Chain 2 of the conflict coefficients and GOF.   

From the aspect of final MUFIs, the three scenarios of R-W, NSF-W and  NJ-W represent a common characteristic: First, with the decreasing of the intersecting probabilities, the corresponding MUFIs reduce. After passing their optimal routing schemes, the corresponding MUFIs keep increasing as the intersecting probabilities grow. These results further verify the effectiveness of Theoretic Chain 1 of the intersecting probability. 

Meanwhile, the results also exhibited the importance of designing a better routing scheme to decrease the intersecting probability. When the information of the network topology and traffic distribution are obtained, how to optimally assign the requests on the $K$ candidate paths to decrease the intersecting probability is crucial to the final performance of the RSA. Taking the three scenarios of R-W, NSF-W and  NJ-W with $K=2$ for examples, the (worst, best) MUFI pairs are (1786.51, 1132.32), (1500.23, 1319.02) and (1398.41, 774.08) respectively. Thus, decreasing the intersecting probability can obtain a huge gain in the final spectrum usage.

\subsection{Comparisons in the Frame of Intersecting Probability}

In this subsection, we compare the minimum intersecting probability of the six scenarios using their own optimal routing schemes in Table \ref{tab: tmipiss}. Besides, we present the numerical results of all the 66 cases in the panoramic Fig. \ref{Fig: tmufi66} with intersecting probabilities as the X-axis and MUFIs as the Y-axis to show clearly their correlation.

\begin{figure}[!htbp]
\centering
\begin{tikzpicture}[scale=0.85]
\begin{axis}[
        minor tick num =1,
       minor xtick={1,2,3,4,5,6},
       xticklabels={NSF-U,NJ-U, NJ-W,R-U,R-W,NSF-W},
         ylabel=MUFI,
         scaled y ticks = false,
          ybar,
         xtick=data,
 yticklabel style = {font=\large,xshift=0.5ex},
]

\addplot[fill=yellow,very thick,ybar=1*\pgflinewidth, bar width=9pt,ybar legend,error bars/.cd, y dir=both, y explicit] coordinates {
        (1,330)  += (0,45.01) -= (0,45.01)
        (2,365.56)  += (0,51.57) -= (0,51.57)
        (3,774.04)  += (0,55.83) -= (0,55.83)
        (4,912.42)  += (0,67.38) -= (0,67.38)
        (5,1157)  += (0,72.03) -= (0,72.03)
        (6,1319)  += (0,112.32) -= (0,112.32)

};\label{plot_one}
\end{axis}
\begin{axis}[
  axis y line*=right,
  ylabel=Intersecting Probability,
  axis x line=none,
  scaled ticks=false,
  legend style={at={(0.3,1)},anchor= north,legend cell align=left},
        yticklabel=\pgfmathparse{100*\tick}\pgfmathprintnumber{\pgfmathresult}\,\%,
        yticklabel style={/pgf/number format/.cd,fixed,precision=2}  
]
\addlegendimage{/pgfplots/refstyle=plot_one}\addlegendentry{Final MUFI}
 \addplot[color=blue,mark=o,very thick,error bars/.cd, y dir=both, y explicit] coordinates {
        (1,0.0977)  += (0,0.0046) -= (0,0.0046)
        (2,0.0909)  += (0,0.0064) -= (0,0.0064)
        (3,0.1830)  += (0,0.0096) -= (0,0.0096)
        (4,0.2328)  += (0,0.0089) -= (0,0.0089)
        (5,0.3025)  += (0,0.0147) -= (0,0.0147)
        (6,0.2923)  += (0,0.0141) -= (0,0.0141)

};
\addlegendentry{Realistic Probability};
\addplot[color=red,mark=*,very thick,error bars/.cd, y dir=both, y explicit] coordinates {
        (1,0.0979) 
        (2,0.0894)  
        (3,0.1808)  
        (4,0.2328)  
        (5,0.3026)  
        (6,0.2930)  

};
\addlegendentry{Theoretic Probability};
\end{axis}
\end{tikzpicture}
\caption{Numerical results for the six scenarios using their optimal routing schemes.}
\label{Fig: nrnssopr}
\end{figure}
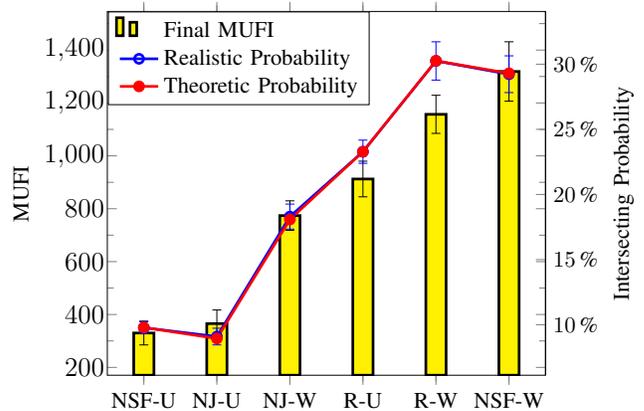

In Fig. \ref{Fig: nrnssopr}, we present the final MUFIs of the six scenarios using their own optimal routing schemes. Compared to the results in Figs. \ref{Fig: nrru}-\ref{Fig: nrnjc}, it can be observed that the MUFI of the optimal routing scheme of each scenario is indeed the minimum. The realistic intersecting probability, as always, matches very well the theoretic one, which once again proves the effectiveness of the Theoretic Chain 2 of the conflict coefficients and GOF. 

From Table \ref{tab: tmipiss}, the increasing order of minimum intersecting probabilities of the six scenarios is as follows.

\begin{table}[!htb]
\centering
  \caption{The Increasing Order of Minimum Intersecting Probabilities}
\label{tab: tiosminip}
\resizebox{0.47\textwidth}{!}{\begin{tabular}{|c|c|c|c|c|c|}
\hline
 NJ-U & NSF-U &  NJ-W & R-U & NSF-W & R-W \\
\hline
  8.94\% &  9.79\%  & 18.08\% & 23.28\% & 29.30\% & 30.26\%\\
\hline
\end{tabular}}
\end{table}

From the final MUFIs of Fig. \ref{Fig: nrnssopr}, we can see that the intersecting probabilities are in general positively correlated to the final MUFIs except two cases: NSF-U vs. NJ-U and NSF-W vs. R-W. The two exceptions can be interpreted as follows. We can see that the intersecting probability difference between NSF-U and NJ-U is less than 1\% (similar for R-W and NSF-W). Thus, the final number of edges of their conflict graphs are so close that the spectrum assignment plays a significant role in the final MUFIs, while the performances of the spectrum assignment in different EONs are varying.

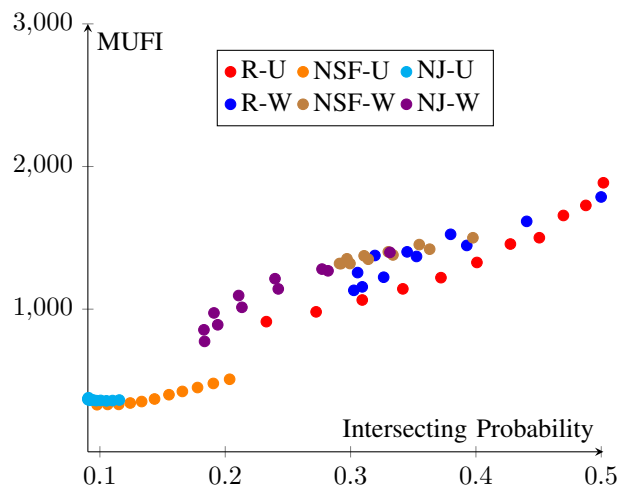
\begin{figure}[!htbp]
\begin{tikzpicture}
\begin{axis}[
    legend style={at={(0.25,0.85)},anchor= west,legend cell align=left,legend columns=3},
    axis lines=middle,
    ymin=0, ymax=3000,
    ylabel=MUFI, 
    xlabel=Intersecting Probability,
]

\addplot [only marks,red] table {
  0.5017 1885.92
  0.4877 1727.36  
  0.4698 1656.78  
  0.4506 1500.78 
  0.4275 1456.54
  0.4008 1327.36  
  0.3721 1220.78  
  0.3417 1142.24  
  0.3093 1064.28  
  0.2725 981.10  
  0.2328 912.42                          
};\addlegendentry{R-U};
\addplot [only marks, color=orange] table {
 0.2035 508.56
 0.1905 479.18 
 0.1779 450.54
 0.1658 423.22  
 0.1551 400.22 
 0.1436 370.28  
 0.1333 352.32 
 0.1242 342.10  
 0.1150 332.34
 0.1062 332.10
 0.0977 330                    
};\addlegendentry{NSF-U};
\addplot [only marks, color=cyan] table {
    0.1155 362.92
    0.1101 359.26
    0.1052 357.38
    0.1006 360.20
    0.0971 358.74
    0.0944 363.74
    0.0920 361.34
    0.0909 374.58
    0.0905 365.56
    0.0904 372.44
    0.0909 379.06   
};\addlegendentry{NJ-U};
\addplot [only marks, color=blue] table {
     0.4999 1786.5
     0.4405 1615.5
     0.3927 1446.7
     0.3526 1369.4
     0.3264 1223.9
     0.3093 1156.28
     0.3026 1132
     0.3056 1256.4
     0.3194 1376.2
     0.3451 1401.7
     0.3798 1524.7       
};\addlegendentry{R-W};
\addplot [only marks, color=brown] table {
   0.3976 1500.5
   0.3631 1420.5
   0.3338 1380
   0.3141 1350
   0.2995 1320
   0.2923 1319
   0.2912 1320
   0.2971 1353
   0.3108 1374
   0.3301 1401.7
   0.3548 1452.6           
};\addlegendentry{NSF-W};

\addplot [only marks, color=violet] table {
   0.3314 1398
   0.2820 1268.8
   0.2423 1142
   0.2133 1013
   0.1940 890.44
   0.1835 774.04
   0.1830 855.2
   0.1910 973.58
   0.2107 1095.5
   0.2397 1213.5
   0.2772 1280.9           
};\addlegendentry{ NJ-W};
\end{axis}
\end{tikzpicture}
\caption{The MUFIs and intersecting probabilities in the 66 cases.}
\label{Fig: tmufi66}
\end{figure}

To give a panoramic picture of the relation between intersecting probabilities and final MUFIs, we collect together in Fig. \ref{Fig: tmufi66} the numerical results of the 66 cases. Although the 66 cases are conducted in different EONs, traffic distributions as well as routing schemes, we can uniformly analyze their differences in terms of the intersecting probability. Figure \ref{Fig: tmufi66} further confirms the Theoretical Chain 1 in Fig. \ref{fig:chian the}.

In summary, all the numerical results validated the proposed Theoretical Chain 1 in Fig. \ref{fig:chian the} and Theoretical Chain 2 in Fig. \ref{fig:chian the2}. With the help of the key role, intersecting probability, the two theoretical chains exactly figure out how the network topology, traffic distribution and routing scheme impact on the spectrum usage. These numerical results also demonstrate the importance of decreasing the intersecting probability. 

\section{Conclusions}
\label{sec:conclusion}
In this work, we provided a theoretical analysis to reveal how network topology, traffic distribution and routing scheme impact on the spectrum usage. In this theoretical analysis, we developed two theoretical chains to figure out the implicit impact mechanism. 

We first investigated the property of the conflict graph built upon the computed lightpaths. We proved that the optimal MUFI of the conflict graph is directly determined by its chromatic number, and the chromatic number has a strongly positive correlation to the edge existence probability in the conflict graph, \textit{i.e.}, the intersecting probability. In other words, the smaller the intersection probability, the smaller the optimal MUFI, which constitutes our first theoretical chain. 

We then proposed the concept of conflict coefficients, which are important parameters decided by the network topology and traffic distribution. We further developed the quadratic programming GOF with the routing scheme to determine the intersecting probability. This constitutes our second theoretical chain.  

Finally, the proposed theoretical chains have been validated by extensive simulations in several well-known EONs. The future work about the application of this work can go in-depth in the two directions: 1. To adjust the network topology and traffic distribution to optimize the conflict coefficients so as to decrease the minimum intersecting probability; 2. to take into account the network topology and traffic distribution to plan a good routing scheme to approach the minimum intersecting probability.  

\bibliographystyle{IEEEtran}
\bibliography{reference}
\end{document}